\newcommand{\extended}[1]{}    %
\newcommand{\short}[1]{#1}     %
\newcolumntype{d}[1]{D{.}{.}{#1} }
\title[Practical Abstraction for MAS]{Practical Abstraction for Model Checking of Multi-Agent Systems}
\author{Wojciech Jamroga}
\affiliation{
  \institution{University of Luxembourg}
  \city{Luxembourg}
  \country{Luxembourg}}
\email{wojciech.jamroga@uni.lu}
\author{Yan Kim}
\affiliation{
  \institution{University of Luxembourg}
  \city{Luxembourg}
  \country{Luxembourg}}
\email{yan.kim@uni.lu}
\begin{abstract}
Model checking of multi-agent systems (MAS) is known to be hard, both theoretically and in practice.
\extended{The state-space explosion is a major challenge here, as faithful models of real-world systems are immensely huge and infeasible even to generate -- let alone verify them.}%
A smart abstraction of the state space may significantly reduce the model, and facilitate the verification. %
\extended{%
However, while state abstraction is well studied from the theoretical point of view, little work has been done on how to define actual abstractions in practice.
}%
In this paper, we propose and study an intuitive agent-based abstraction scheme, based on the removal of variables in the representation of a MAS.
This allows to achieve a desired reduction of a state space without generating the global model of the system.
Moreover, the process is easy to understand and control even for domain experts with little knowledge of computer science.
We formally prove the correctness of the approach, and evaluate the gains experimentally on a family of a postal voting procedure models.
\end{abstract}
\keywords{model checking, abstraction, multi-agent systems}
\definecolor{tucgreen}{RGB}{0,140,79}
\newcommand{\overapprox}{\precsim}
\newcommand{\simul}{\preceq}
\newcommand{\abstr}{\mathcal{A}}
\newcommand{\unwrap}{\mathcal{M}}
\newcommand{\Scope}{\textit{Sc}}
\newcommand{\dott}{\ .\ }
\newcommand{\masASV}{{\mathit{ASV}}}
\newcommand{\NCand}{\ensuremath{\mathit{NC}}}
\newcommand{\NVot}{\ensuremath{\mathit{NV}}}
\newcommand{\may}{\textit{may}}
\newcommand{\must}{\textit{must}}
\begin{document}

\pagestyle{fancy}
\fancyhead{}

\maketitle

\section{Introduction}\label{sec:intro}
\emph{Multi-agent systems (MAS)} describe interactions of multiple entities cal\-led {agents}, often assumed to be autonomous, intelligent, and/or rational. The theoretical foundations of MAS are mostly based on modal logic and game theory~\cite{Wooldridge02intromas,Shoham09MAS}. In particular, the temporal logics \CTL, \LTL, and \CTLs provide intuitive formalizations of many relevant properties, including reachability, liveness, safety, and fairness~\cite{Emerson90temporal}.
Algorithms and tools for verification\extended{ of such properties} have been in constant development for 40 years, with temporal model checking being the most popular approach~\cite{Baier08mcheck,Clarke18principles}.

However, formal verification of MAS is known to be hard, both theoretically and in practice.
The state-space explosion is a major source of the computational hardness, as faithful models of real-world systems are immensely huge and infeasible even to generate, let alone verify. In consequence, model checking of MAS with respect to their \emph{modular representations} ranges from \PSPACE-complete to undecidable~\cite{Schnoebelen03complexity,Bulling10verification}.
No less importantly, it is often unclear how to create the input model, especially if the system to be modeled involves human behavior~\cite{Jamroga20Pret-Uppaal}. Similarly, formalizing the relevant properties in the right way is by no means trivial~\cite{Jamroga21anticovid}. Both parts of the specification are error-prone and difficult to debug and validate: most model-checkers for MAS do not even have a graphical user interface.\footnote{
  Notable exceptions include \Uppaal~\cite{Behrmann04uppaal-tutorial} and STV~\cite{Kurpiewski21stv-demo}. }
Thus, in realistic cases, one does not even know if what they are verifying and what they \emph{think} they verify are indeed the same thing.

Much work has been done on containing the state-space explosion by smart representation and/or reduction of input models. Symbolic model checking based on SAT- or BDD-based representations of the state/transition space~\cite{McMillan93symbolic-mcheck,McMillan02unbounded,Penczek03ctlk,Kacprzak04verifying,Lomuscio07tempoepist,Huang14symbolic-epist} fall into the former group.
Model reduction methods include partial-order reduction~\cite{Peled93representatives,Gerth99por,Jamroga20POR-JAIR}, equivalence-based reductions~\cite{Bakker84equivalences,Alur98refinement,Belardinelli21bisimulations}, and state/action-abstraction~\cite{Cousot77abstraction,Clarke94abstraction,Godefroid02abstraction}. More specific variants of abstraction for multi-agent systems have been proposed in~\cite{Alfaro04three,Enea08abstractions,Cohen09abstraction-MAS,Lomuscio10dataAbstraction}, and more recently in~\cite{Kouvaros17predicateAbstraction,Belardinelli17abstraction,Belardinelli19abstractionStrat}.
A smart abstraction of the state space may reduce the model to manageable size by clustering ``similar'' concrete states into \emph{abstract states}, which should facilitate verification.
Unfortunately, such clustering may remove essential information from the model, thus making the verification of the abstract model inconclusive for the original model.
Lossless abstractions can be obtained by means of abstraction-refinement~\cite{Clarke00cegar,Clarke03cegar,Shoham04abstraction,Ball06abstraction,Dams18abstraction+refinement} but, typically, they are difficult to compute or provide insufficient reduction of the model -- quite often both.

In consequence, one has to live with abstractions that only approximate the concrete model. Moreover, crafting a good abstraction is an art that relies on the domain expertise of the modeler.
Since domain experts are seldom computer scientists or specialists in formal methods, the theoretical formulation of abstraction as an arbitrary mapping from the concrete to the abstract state space has little appeal.
Moreover, model checking tools typically do not support abstraction, so doing one would require to manipulate the input specification code, which is a difficult task in itself.
What we need is a simple and intuitive methodology for selecting information to be removed from a MAS model, and for its automated removal that preserves certain guarantees.
Last but not least, practical abstraction should be applied on modular representations of MAS, unlike the theoretical concept that is usually defined with respect to explicit models of global states.

In this paper, we suggest that the conceptually simplest kind of abstraction consists in removing a domain variable from the input model.
This can be generalized to the merging of several variables into a single one, and possibly clustering their valuations. 
It is also natural to restrict the scope of abstraction to a part of the graph.
As the main technical contribution, we propose a correct-by-design method to generate such abstractions.
We prove that the abstractions preserve the valuations of temporal formulae in Universal \CTLs (\ACTLs).
More precisely, our \may-abstractions preserve the falsity of \ACTLs properties, so if $\varphi\in\ACTLs$ holds in the abstract model, it must also hold in the original one.
Conversely, our \must-abstractions preserve the truth of \ACTLs formulae, so if $\varphi\in\ACTLs$ is false in the abstract model, it must also be false in the original one.
We also evaluate the efficiency of the method by verifying a scalable model of postal voting in \Uppaal. 
The experiments show that the method is user-friendly, compatible with a state of the art verification tool, and capable of providing significant computational gains.

\section{Preliminaries}\label{sec:preliminaries}

We start by introducing the models and formulae which serve as an input to model checking.

\subsection{MAS Graphs}

To represent the behavior of a multi-agent system, we use modular representations inspired by reactive modules~\cite{Alur99reactive}, interleaved interpreted systems~\cite{lomuscio10partialOrder,Jamroga20POR-JAIR}, and in particular by the way distributed systems are modeled in \Uppaal~\cite{Behrmann04uppaal-tutorial}.

Let $\Var$ be a finite set of typed variables over finite domain%
\footnote{We will focus exclusively on variables over finite domains, the only variant of variables supported by Uppaal, and, naturally, the one interpreted by the programs.}.
By $\Eval(\!\Var)$ we denote a set of evaluation functions which map variables $v\in\Var$ to values in their domains $\Dom(v)$.

$\Cond$ is a set of logical expressions (also called \emph{guards}) over $\Var$ and literal values, possibly involving arithmetic operators (i.e., ``$+$'', ``$-$'', ``$*$'', ``$/$'' and ``$\%$'').

Let $\ChanId$ be a finite set of asymmetric synchronization channels. We define the set of synchronizations as $\Chan = \set
{c!, c? \mid c\in\ChanId}\cup\set{-}$,
with $c!$ and $c?$ for sending and receiving on a channel $c$ accordingly, and ``$-$'' standing for no synchronization.
For the paper, we will focus only on 1-to-1 type of synchronization; adding the 1-to-many (broadcast) and many-to-many is straightforward (by minor tweaks of \autoref{def:combined}), is of a less academic value, and therefore omitted.

\begin{definition}
	An \emph{\agname} \extended{, describing the behaviour of an agent,} is a tuple
	$\agsym=(\Var, \Loc, l_0, g_0, \textit{Act},$ $\textit{Effect}, \hookrightarrow)$,
	consisting of:
	\begin{itemize}[noitemsep,parsep=0pt]
	\item $\Var$: a finite set of typed \emph{variables} over finite domains;
	\item $\Loc$: a non-empty finite set of \emph{locations} (nodes in the graph);
	\item $l_0\in\Loc$: the initial location;
	\item $g_0\in \Cond$: the initial condition\extended{, required to be satisfied at the initial location};
	\item $\Act$: a set of \emph{actions}, with $\tau\in\Act$ standing for ``do nothing'';
	\item $\textit{Effect}: \Act\times \Eval(\!\Var) \mapsto \Eval(\!\Var)$: the \emph{effect} of an action\footnote{%
		Clearly, on an argument  $\tau$ it will behave just as an identity function over $\Eval(\!\Var)$.%
	};
	\item $\hookrightarrow\subseteq\Loc\times\Label\times\Loc$: the set of labelled edges defining the local \emph{transition relation}.
	\end{itemize}
\end{definition}

The set of labels is defined as $\Label\subseteq\Cond \times\Chan \times\Act$.
\extended{Therefore, we introduce destructors $\textit{cond}$, $\textit{sync}$, $\textit{act}$ mapping a label to its counterpart.}
An edge $(l,T,l')\in\,\hookrightarrow$, where $T\in\Label$, will often be denoted by $l\xhookrightarrow{g:ch\,\alpha}l'$, where $g=\textit{cond}(T)$, $ch=\textit{sync}(T)$ and $\alpha=\textit{act}(T)$.\footnote{%
	For the sake of readability, when $ch=-$, the channel sub-label will be omitted.
}%

A condition $g\in\Cond$ can be associated with a set of evaluations $\textit{Sat}(g)=\{\eta\in\Eval(\!\Var) \ |\ \eta\models g \}$ satisfying it.
An edge labelled with $T\in\Label$ is said to be \emph{locally enabled} for evaluation $\eta\in\Eval(\!\Var)$ iff
$\eta\in\textit{Sat}(\cond(T))$.
Furthermore, every action $\alpha\in\Act\setminus\set{\tau}$ can be associated with a non-empty sequence of atomic assignment statements (also called \emph{updates}) of the form $\alpha^{(1)}\alpha^{(2)}\ldots\alpha^{(m)}$.%

A function $\mathcal{V}:\!\Cond\cup\!\Act\mapsto 2^{\Var}$ maps an expression or an assignment statement to the set of occurring variables.

Without loss of generality, we assume that for the set $\Var=\set{v_1,\ldots,v_k}$ its elements are enumerated and/or ordered in some arbitrary yet fixed way.
Then the evaluation of any $V\subseteq\Var$ can be represented as a vector $\eta(V)=[\eta(v_{i_1}),\ldots,\eta(v_{i_l})]$, where $\forall_{j=1..l}.i_j\in\set{1,\ldots,k}\wedge \forall_{j=1..l}.i_j\leq i_{j+1}$.

Let $V\subseteq\!\Var$ and $r\in\Cond\cup\Act\cup\Eval(\!\Var)$, by $r[V=C]$ we denote the substitution of all free occurrences of variables $V$ in $r$ by $C\in\Dom(V)$.

\begin{definition}
A \emph{\magname}\ is a multiset of \agnames additionally parameterized by a set of shared (global) variables.
We assume w.l.o.g.~that all local variables have unique names (e.g., achieved by prefixing those with a name of its agent graph), then a set of shared variables can be derived by taking an intersection over the variable sets of component agents.
\end{definition}
	
\begin{figure}[t]
	\centering
	\begin{subfigure}[c]{0.45\columnwidth}
		\centering
		\includegraphics[width=\linewidth]{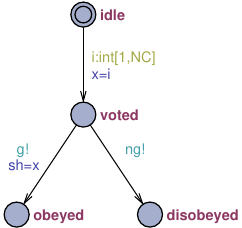}
		\label{fig:upp-async-voting-v}
	\end{subfigure}
	\begin{subfigure}[c]{0.53\columnwidth}
		\centering
		\includegraphics[width=\linewidth]{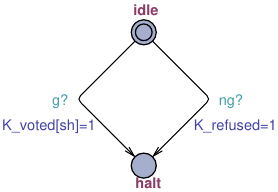}
		\label{fig:upp-async-voting-c}
	\end{subfigure}
	\caption{\magname\ for ASV: (a) Voter, (b) Coercer}
	\label{fig:upp-async-voting}
\end{figure}

\begin{example}[ASV]
\label{ex:working-model}
As the running example, we use a variation of the Asynchronous Simple Voting scenario of~\cite{Jamroga20POR-JAIR}.
Its \magname\ $\masASV=\multiset{\Var_{sh},\agsym_{\textit{Voter}},\agsym_{\textit{Coercer}}}$ is shown in~\autoref{fig:upp-async-voting}.
The system is parameterized by the number of candidates $NC$ and consists of a voter graph $\agsym_{\textit{Voter}}$, a coercer graph $\agsym_{\textit{Coercer}}$, a set of shared variables $\Var_{sh}=\{sh\}$ and an implicit $g_0 = \bigwedge_{v\in\Var}(v==0)$.

The voter starts by non-deterministically selecting one of the candidates (\texttt{i:int$[$1,NC$]$}), and casting a vote for the candidate ({\small{\texttt{idle}$\to$\texttt{voted}}}).
Then, she decides to either give the proof of how she voted to the coercer
({\small\texttt{voted}$\to$\texttt{obeyed}}), %
or to refuse it ({\small\texttt{voted}$\to$\texttt{disobeyed}}). %
Both options require executing a synchronous transition with the coercer.
In turn, the coercer either gets the proof and learns for whom the vote was cast, or becomes aware of the voter's refusal.
\end{example}

\subsection{Models of MAS Graphs}

\begin{figure*}%
	\centering
	\begin{tabular}{@{}c@{\hspace{0.9cm}}c@{}}
    \begin{tabular}{@{}c@{}}
    \includegraphics[width=0.7\columnwidth]{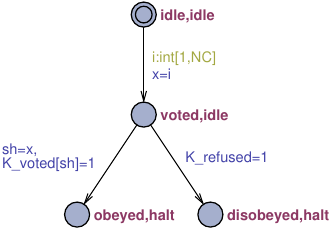}
    \end{tabular}
      &
    \begin{tabular}{@{}c@{}}
    \includegraphics[width=1.3\columnwidth]{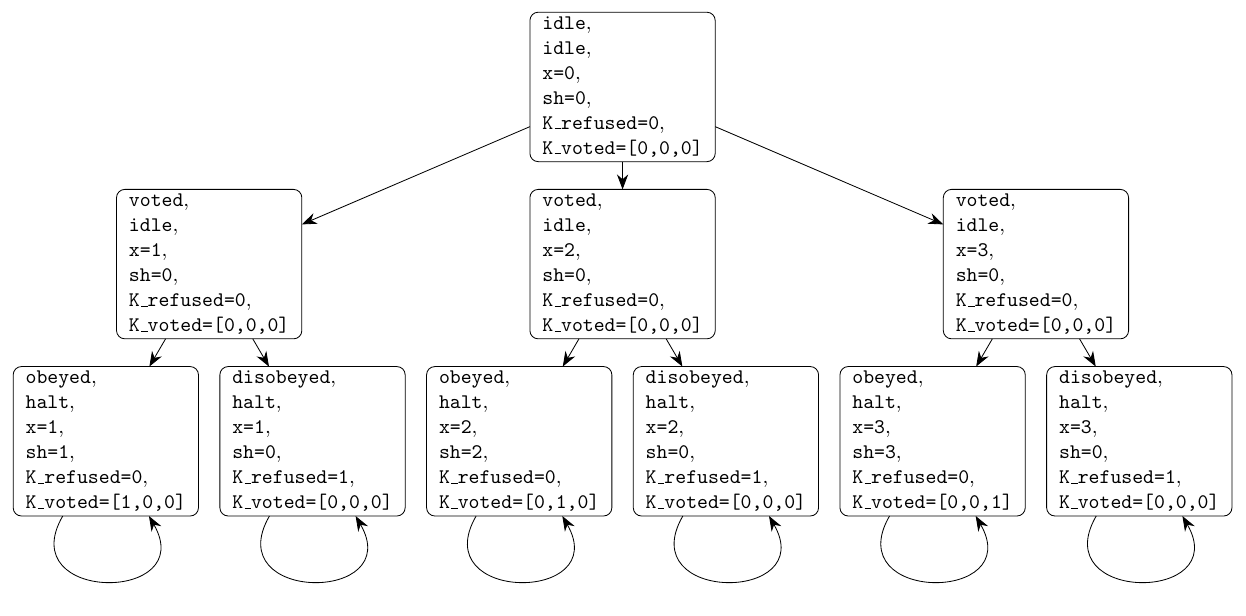}
    \end{tabular}
    \end{tabular}
    \vspace{-0.4cm}
	\caption{Models of MAS:\quad (a) combined MAS graph of ASV;\quad (b) Unwrapping for ASV with \mbox{$\NCand=3$} candidates}
	\label{fig:upp-async-voting-composition}
	\label{fig:upp-async-voting-lts}
\end{figure*}

We define the execution \extended{semantics }of a MAS graph by its \emph{unwrapping}.

\begin{definition}
	\label{def:combined}
	Let $\magsym = \multiset{\Var_{sh},\agsym_1,\ldots,\agsym_n}$ be a
	\magname having a set of shared variables $\Var_{sh}$.
	The \emph{combined \magname} of $\magsym$ is the agent graph $\agsym_{\magsym} = (\Var, \Loc, l_0, g_0, \Act, \Effect,\hookrightarrow)$, where $\Var = \bigcup_{i=1}^{n}\Var_i$, $\Loc = \Loc_1\times\ldots\times\Loc_n$, $l_0 = (l_{0,0}, \ldots, l_{n,0})$, 
	$g_0 = g_{0,0}\wedge\ldots\wedge g_{n,0}$, $\Act = \bigcup_{i=1}^{n}\Act_i$.

	Relation $\hookrightarrow$ is obtained inductively by the following rules (where $l_i,l'_i\in\Loc_i$, $l_j,l_j'\in\Loc_j$, $c\in \ChanId_i\cap\ChanId_j$
	for two	\agnames $\agsym_i$ and $\agsym_j$ of distinct indices	$1 \leq i,j\leq n$):

	\begin{small}
	\begin{tabular}{c@{\qquad}c}
	$\begin{prooftree}
			l_i \lhook\joinrel\xrightarrow{g_i:c!\alpha_i}_i l_i' \wedge l_j \lhook\joinrel\xrightarrow{g_j:c?\alpha_j}_j l_j'
			\justifies
			(l_i,l_j)\lhook\joinrel\xrightarrow{g_i\wedge g_j: (\alpha_j\updcomp \alpha_i) }(l_i',l_j')
			\end{prooftree}$	
	&
	$\begin{prooftree}
		l_i \lhook\joinrel\xrightarrow{g_i:\alpha_i}_i l_i'
		\justifies
		(l_i,l_j)\lhook\joinrel\xrightarrow{g_i: \alpha_i }(l_i',l_j)
		\end{prooftree}$
	\\
	\\
	$\begin{prooftree}
			l_i \lhook\joinrel\xrightarrow{g_i:c?\alpha_i}_i l_i' \wedge l_j \lhook\joinrel\xrightarrow{g_j:c!\alpha_j}_j l_j'
			\justifies
			(l_i,l_j)\lhook\joinrel\xrightarrow{g_i\wedge g_j: (\alpha_i\updcomp \alpha_j) }(l_i',l_j')
			\end{prooftree}$
	&
	$\begin{prooftree}
		l_j \lhook\joinrel\xrightarrow{g_j:\alpha_j}_j l_j'
		\justifies
		(l_i,l_j)\lhook\joinrel\xrightarrow{g_j: \alpha_j }(l_i,l_j')
		\end{prooftree}$
	\end{tabular}
	\end{small}

	\noindent Lastly, the effect function is defined by: %
	$$\small\textit{Effect}(\alpha,\eta) =
	\begin{cases}
		\Effect_i(\alpha,\eta) &\text{if } \alpha\in \Act_i\\
		\Effect(\alpha_i,\Effect(\alpha_j,\eta)) &\text{if } \alpha = \alpha_i\updcomp\alpha_j
	\end{cases}$$

	Note that by construction of \extmagname its edges can be labelled only by $T\in\Label$ where $\textit{sync}(T)=-$.
\end{definition}

\begin{example}\label{ex:async-voting-concurrent-composition}
The \extmagname	$\agsym_{\masASV}$ for asynchronous simple voting of \autoref{ex:working-model} is depicted in \autoref{fig:upp-async-voting-composition}a.

\end{example}

\begin{definition}
A \emph{model} is a tuple $M=(\textit{St}, I, \longrightarrow, AP, L)$, where
	$\textit{St}$ is a set of states,
	$I\subseteq \textit{St}$ is a non-empty set of initial states,
	$\longrightarrow  \subseteq \textit{St}\times \textit{St}$ is a transition relation,
	$AP$ is a set of atomic propositions,
	$L:\textit{St}\to 2^{AP}$ is a labelling function.
    We assume $\longrightarrow$ to be serial, i.e., there is at least one outgoing transition at every state.
\end{definition}

Nodes and edges in an agent graph correspond to \emph{sets} of states and transitions, defined by its unwrapping.

\begin{definition}\label{def:unwrapping}
The \emph{\unwrapping} of an \agname\ $\agsym%
\extended{=(\textit{Var}, \textit{Loc}, l_0, \textit{Cond}, g_0, \hookrightarrow, \textit{Act}, \textit{Effect})}$ %
is a model $\mathcal{M}(\agsym) = (\textit{St}, I, \longrightarrow, AP, L)$, where:
\begin{itemize}
	\item $\textit{St} = \Loc \times \Eval(\!\Var)$,
	\item $I = \{ \abracket{l_0, \eta} \in \textit{St}\mid \eta\in \textit{Sat}(g_0) \}$,
	\item $\longrightarrow\ = \longrightarrow_0\ \cup \{(s,s)\in \textit{St} \times \textit{St} \mid \lnot\exists s'\in \textit{St}\dott s\longrightarrow_0 s'\}$, where
        $\longrightarrow_0\ =\{(\abracket{l,\eta},\abracket{l',\eta'})\in \textit{St} \times \textit{St} \mid \exists\ l\xhookrightarrow{g:\alpha}l'\,.\,\eta\in \textit{Sat}(g) \wedge \eta'=\Effect(\alpha, \eta)\}$,%
		\footnote{Note that $\longrightarrow$ adds loops wherever necessary to make the relation serial.}%
	\item $AP = \Loc \cup \!\Cond$,
	\item %
	$L(\abracket{l,\eta})=\{l\}\cup\{g\in \Cond \mid \eta \in \textit{Sat}(g)\}$.
\end{itemize}

The unwrapping $\unwrap(\magsym)$ of a \magname $\magsym$ is given by the unwrapping of its combined graph.
\end{definition}

By a slight abuse of notation, we lift some operators to sets of states.
For example, $L(S) = \bigcup_{s\in S} L(s)$ for $S\subseteq\textit{St}$.
Moreover, we will use $AP(V)$ to denote the subset of propositions that do not use variables from outside $V$.

\begin{example}\label{ex:asv-unwrapping}
The unwrapping $\mathcal{M}({\masASV})$ of the \magname for asynchronous simple voting (\autoref{ex:working-model})
with 3 candidates is shown in \autoref{fig:upp-async-voting-lts}b.
\end{example}

\begin{definition}\label{def:path}
Let $M$ be a model.
A \emph{run} in $M$ is a sequence of states $s_0s_1\dots$, such that $s_i\in\textit{St}$ and $s_i\longrightarrow s_{i+1}$ for every $i$.
{A \emph{path} is an infinite run.}
For a finite run $\pi=s_0s_1\ldots s_n$, let $\len(\pi)=n$ denote its length.
By $\pi[k]$ and $\pi[i,j]$ we will denote a $k$-th state and a run fragment from $i$ to $j$ of $\pi$ accordingly. %
The sets of all runs in $M$, all paths in $M$, and all paths starting from state $s$ are denoted by
$\textit{Runs}(M)$, $\textit{Paths}(M)$, and $\textit{Paths}(s)$.
Similarly, $\textit{Runs}^{t}$ will denote the set of runs of fixed length $t\in\mathbb{N}^+\cup\{\infty\}$.

A state $s\in\textit{St}$ is said to be \emph{reachable} in model $M$ if there exists a run $\pi=s_0\ldots s_n$, s.t. $s_0\in I$ and $s_n=s$.
\end{definition}

\extended{
  \begin{definition}
  	Let $M=\mathcal{M}(\magsym)$ be a model of an \extmagname\ $\magsym$.
  	A
  	sequence of states
  	$\pi = \abracket{l_`1,\eta_1}\rightarrow\abracket{l_2,\eta_2}\rightarrow\ldots$
  	is called a \emph{run} of model $M$ if following holds:
  	\begin{itemize}
  		\item if $l_1=l_0$, then $\eta_1\in\textit{Sat}(g_0)$,
  		\item $\Apath  {1\leq i< \len(\pi)}\;\exists\ l_i\xhookrightarrow{g_i:\alpha_i}l_{i+1}$ such that:
  		\begin{itemize}
  			\item[--] $\eta_i\in\textit{Sat}(g_i)$, and
  			\item[--] $\eta_{i+1}=\textit{Effect}(\alpha_{i}, \eta_i)$ 
  		\end{itemize}
  	\end{itemize}
  	where $\len(\pi)\in\mathbb{N}^{+}\cup\{ \infty \}$.
  	\smallskip

  \end{definition}

  \begin{definition}
  	A \emph{path} is a maximal run, in a sense that it is either infinite or ends in a state with no outgoing transitions.
  	$\textit{Paths}(M)\subseteq\textit{Runs}(M)$ is a set of all paths of $M$,
  	whereas
  	$\textit{Paths}(s)\subseteq\textit{Paths}(M)$
  	is a set of paths starting in state $s\in\textit{St}$.
  \end{definition}

\begin{definition}
	An evaluation $\eta\in \Eval(\!\Var)$ is \emph{reachable} at location $l\in\Loc$ if there exists a run $\pi\in \textit{Runs}(M(P))$ such that $\pi[j] = \abracket{l,\eta}$ for some $0\leq j\leq \len(\pi)$.
\end{definition}
}%

\subsection{Branching-Time Logic \actls}

To specify requirements, we use the \emph{universal fragment} of the branching-time logic \CTLs~\cite{Emerson90temporal} (denoted $\actls$)%
\footnote{%
	Not to be confused with ``Action \CTL'' of~\cite{Nicola90action-ctl}.%
}
with $\Apath$ (``for every path'') as the only path quantifier.
The syntax for $\actls$ over a set of atomic propositions $AP$ is formally given by:
\begin{eqnarray*}
\psi &::=& \top \mid  \bot\mid a\mid \neg a \mid \psi \wedge \psi\mid \psi \vee \psi \mid \Apath \varphi \\
\varphi &::=& \psi\mid \varphi\wedge\varphi \mid \varphi\vee\varphi \mid \Next\varphi \mid \varphi\Until\varphi 
\end{eqnarray*}
where $a\in AP$, and $\Next,\!\Until$ stand for ``next'' and ``until,'' respectively.
Formulae $\psi$ are called state formulae, and $\varphi$ are called path formulae.
The semantics of $\actls$ is given with respect to states $s$ and paths $\pi$ of a model $M$.
\extended{
	Let $M=(\textit{St}, I, \rightarrow, AP, L)$, $a\in AP$.
	Satisfaction relation $\models$ of a state formula $\psi$ for a state $s\in\textit{St}$ (denoted by $s\models\psi$) is defined by:
}%
\begin{alignat*}{3}
&M,s\models a &&\quad\text{iff }a\in L(s)\\
\extended{
	&M,s\models \neg a &&\quad\text{iff } a \notin L(s) \\
	&M,s\models \psi_1\wedge\psi_2 &&\quad\text{iff } M,s\models\psi_1 \wedge M,s\models\psi_2 \\
	&M,s\models \psi_1\vee\psi_2 &&\quad\text{iff } M,s\models\psi_1 \vee M,s\models\psi_2 \\
}%
&M,s\models \Apath\varphi &&\quad\text{iff } M,\pi\models\varphi\text{ for all }\pi\in\textit{Paths}(s) \\
&M,\pi\models \psi &&\quad\text{iff }M,\pi[0]\models\psi\\
&M,\pi\models \Next\varphi &&\quad\text{iff }M,\pi[1,\infty]\models\varphi\\
&\M,\pi\models \varphi_1\Until\varphi_2 &&\quad\text{iff } \exists j\,.\,(M,\pi[j,\infty]\models\varphi_2\; \wedge\\
& &&\quad\hspace*{\fill} \forall 1\leq i<j\,.\,M,\pi[i,\infty]\models\varphi_1)
\extended{\\
    &M,\pi\models \varphi_1\wedge\varphi_2 &&\quad\text{iff } M,\pi\models \varphi_1 \wedge M,\pi\models \varphi_2 \\
    &M,\pi\models \varphi_1\vee\varphi_2 &&\quad\text{iff } M,\pi\models \varphi_1 \vee M,\pi\models \varphi_2
}%
\end{alignat*}
\short{The clauses for Boolean connectives are standard. }%
Additional (dual) temporal operators ``sometime'' and ``always'' can be defined as $\Sometm\psi\equiv\top\Until\psi$ and $\Always\psi\equiv\neg\Sometm\neg\psi$.
Model $M$ satisfies formula $\psi$ (written $M\models \psi$) iff $M,s_0\models \psi$ for all $s_0\in I$.

\begin{example}
Model $M=\mathcal{M}(\masASV)$ in \autoref{fig:upp-async-voting-lts}b
satisfies the $\actls$ formulae
$\Apath \Always (\neg\texttt{obeyed}\vee \texttt{K\_voted[x]=1})$,
saying that if Voter obeys, a Coercer gets to know how she voted, and
$\Apath \Always (\neg\texttt{disobeyed}\vee \texttt{K\_refused=1})$,
saying that she cannot disobey Coercer's instructions without his knowledge.
It does not satisfy
$\Apath \Sometm (\neg\texttt{K\_voted=[0,0,0]})$,
saying that Coercer will eventually get to know how Voter voted. %
\end{example}

\section{\mbox{Variable Abstraction for MAS Graphs}}\label{sec:abstraction}

In this section, we describe how to automatically reduce MAS graphs by simplifying their local variables.

\begin{algorithm}[t]
    \SetAlgoLined\DontPrintSemicolon
    compute the combined graph $\agsym_{\magsym}$ of $\magsym$\;
    compute the approximate local domain $d$ for $V$ in $\agsym_{\magsym}$\;
    \ForEach{agent graph $\agsym_i\in \magsym$}{
        compute abstract graph $\abstr(\agsym_i)$ w.r.t. $\Var_i\cap V$ and $d_i$\;
    }
    \Return $\abstr(\magsym)=\multiset{\Var_{sh},\abstr(\agsym_1),\ldots,\abstr(\agsym_n)}$
    \caption{Abstraction of MAS graph $\magsym=\multiset{\Var_{sh},\agsym_1,\ldots,\agsym_n}$ w.r.t. variables $V$}
\label{alg:abstraction-idea}
\end{algorithm}

\subsection{Main Idea}

We transform the MAS graph $\magsym=\multiset{\Var_{sh},\agsym_1,\ldots,\agsym_n}$ in such a way that:\
(i) the resulting abstract MAS graph $\abstr(\!\magsym)$ is transformed \emph{locally}, i.e., $\abstr(\magsym)=\multiset{\Var_{sh},\abstr(\agsym_1),\ldots,\abstr(\agsym_n)}$;\
(ii) the abstract agent graphs $\abstr(\agsym_i)$ have the same structure of locations as their concrete versions $\agsym_i$;\
(iii) the only change results from removal of a subset of local variables $V$, or simplifying their domains of values.

Moreover, we want may-abstraction $\abstr^{\may}=\abstr(\!\magsym)$ to \emph{over-approximate} $\magsym$, in the sense that every transition in $\magsym$ has its counterpart in $\abstr^{\may}$.
Then, every formula of type $\Apath\varphi$ that holds in the model $\unwrap(\abstr^{\may})$ must also hold in the model $\unwrap(\magsym)$.

Alternatively, we may want must-abstraction $\abstr^{\must}=\abstr(\magsym)$ to \emph{under-approximate} $\magsym$, in the sense that all transitions in $\abstr^{\must}$ have their counterparts in $\magsym$.
Then, if $\Apath\varphi$ is false in $\unwrap(\abstr^{\must})$, it also cannot be true in $\unwrap(\magsym)$.

The general structure of the procedure is shown in~\autoref{alg:abstraction-idea}.
First, we approximate the set of reachable evaluations $d(V,l)$ in every location of a combined MAS graph $\agsym_{\magsym}$.
This can be done following the algorithm \approxLocalDomain, presented in Appendix~\ref{sec:approxDomain}.
Then, its output is used to transform the agent graphs $\agsym_i$, one by one, by removing (or simplifying) those variables of $V$ that occur in $\agsym_i$.
We implement it by function \generalVarAbstraction in Sections~\ref{sec:abstraction-by-removal}--\ref{sec:restricted-scope-abstraction}.

\para{Assumptions and Notation.}
Assume there is a unique initial evaluation $\eta_0$ satisfying initial condition $g_0$, $\exists!\eta_0\in\Eval(\!\Var)\cap\textit{Sat}(g_0)$, and assigning each\extended{ variable} $v\in\Var$ with its default value $v_0=\eta_0(v)$.

Furthermore, for an approximation $d$ obtained from variables $V$ and $\Loc=\Loc_1\times\ldots\times\Loc_n$, by $d_i$ we denote a reduced on the $i$-th location component ``narrowing'' of a local domain, where $1\leq i\leq n$.
Intuitively, for $l_j\in\Loc_i$ the value of $d_i(V, l_j)$ is defined as $\bigcup_{l\in \Loc_1\times\ldots\times\Loc_{i-1}\times\set{l_j}\times\Loc_{i+1}\times\ldots\times\Loc_n}d(V,l)$.

\para{Upper- vs. Lower-Approximation.}
We define two variants of the algorithm.
The upper-approximation of local domain for every $l\in\Loc$ initializes $d(V,l)=\varnothing$, and then adds new, possibly reachable values of $V$ whenever they are produced on an edge coming to $l$.
The lower-approximation  initializes $d(V,l)=\Dom(V)$, and iteratively removes the values might be unreachable.
To this end, \approxLocalDomain is parameterized by symbols $d_0$ and $\auxop$, such that $d_0=\varnothing$ and $\auxop=\cup$ for the upper-approximation, and $d_0=\Dom(V)$ and $\auxop=\cap$ for the lower-approximation.
Note that $d_0$ is simply a neutral (identity) element of the $\auxop$-operation.

{\small
\begin{table}[t]
	\centering
	\begin{tabular}{lrr}
	\toprule
	$l\in\Loc$  & $r(l)$ & $d(x,l)$ \\
	\midrule
	$\abracket{\texttt{idle,idle}}$      & $3$  & $\{0\}$     \\
	$\abracket{\texttt{voted,idle}}$     & $2$  & $\{1,2,3\}$      \\
	$\abracket{\texttt{obeyed,halt}}$    & $0$  & $\{1,2,3\}$      \\
	$\abracket{\texttt{disobeyed,halt}}$ & $0$  & $\{1,2,3\}$      \\
	\bottomrule
	\end{tabular}
	\caption{Reachability index $r$ and upper-approximation domain $d$ for ASV}
	\label{tab:example-aux-results}
\end{table}
}

\begin{example}\label{ex:domainapprox}
The output of \approxLocalDomain\ for the upper-approximation of variable $x$ in the combined ASV graph of \autoref{ex:async-voting-concurrent-composition} can be found in \autoref{tab:example-aux-results}.
\end{example}

\subsection{Abstraction by Removal of Variables}\label{sec:abstraction-by-removal}

\begin{algorithm}[t]
  \SetAlgoLined\DontPrintSemicolon
  \myproc{\computeAbstraction{$\agsym=\agsym_i$, $V$, $d=d_i$}}{
  $\hookrightarrow_a:=\varnothing$\;
  \ForEach{$l\xhookrightarrow{g:ch\,\alpha} l'$}{
  \ForEach{$c\in d(V,l) $}{
  $g':=g[V=c]$\; %
  $\delta_0=\set{\eta\in\mathit{Sat}(g)\mid \eta(V)=c}$\;
  \text{let }$\alpha'=\alpha^{(1)}\ldots\alpha^{(m)}$\;
  \For{$i=1$ \KwTo $m$}{
    $\delta_i=\set{\eta'=\Effect(\eta,\alpha^{(i)})\mid \eta\in\delta_{i-1}}$\;
    \uIf{ $\LHS(\alpha^{(i)})\notin V$}{
      $\alpha^{(i)}:=\tau$
    }
  }
  $A=\prod_{i=1}^{m}\set{\alpha^{(i)}[V=\eta(V)]\mid \eta\in\delta_i}$\;
  $\hookrightarrow_a := \hookrightarrow_a \cup (\bigcup_{\alpha'\in A} \{ l\xhookrightarrow{g':ch\,\alpha'} l' \} )$
  }
  }
  $\agsym.\agedges := \hookrightarrow_a$\;
  \KwRet{$\agsym$}
  }
  \caption{Abstraction by variable removal}
  \label{alg:over-approx-abstraction}
\end{algorithm}

The simplest form of variable-based abstraction consists in the complete removal of a given variable $x$ from the MAS graph.
To this end, we use the approximation of reachable values of $V\subseteq \Var$, produced by \approxLocalDomain.
More precisely, we transform every edge between $l$ and $l'$ that includes variables from $V'\subseteq V$ in its guard and/or its update into a set of edges (between the same locations), each obtained by substituting $V'$ with a different value $C\in d(V',l)$, see \autoref{alg:over-approx-abstraction}.
The abstract agent graph obtained by removing variables $V$ from $\agsym$ in the context of $\magsym$ is denoted by $\mathcal{A}_{\set{V}}\left(\agsym,\magsym\right)$.
Whenever relevant, we will use $\mathcal{A}^{\may}$ (resp.~$\mathcal{A}^{\must}$) to indicate the used variant of abstraction.

\begin{example}
The result of removing variable $x$ from the voter graph, according to the domain's upper-approximation presented in \autoref{tab:example-aux-results}, is shown in \autoref{fig:upp-async-voting-composition-abstract}. Note that its unwrapping (\autoref{fig:upp-async-voting-abstract-lts}) is distinctly smaller than the original one (\autoref{fig:upp-async-voting-lts}b). Still, as we will formally show in Section~\ref{sec:correctness}, all the paths of the model in \autoref{fig:upp-async-voting-lts}b are appropriately represented in the model of \autoref{fig:upp-async-voting-abstract-lts}.
\end{example}

\subsection{Merging Variables and Their Values}\label{sec:abstraction-by-merging}

\begin{figure}[t]
	\centering
	\includegraphics[width=0.45\columnwidth]{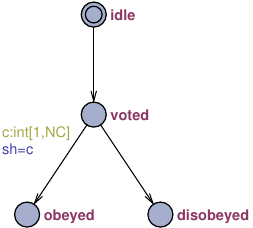}
	\caption{Over-approx. abstraction \mbox{$\abstr^{\may}_{\set{x}}(\agsym_{\textit{Voter}}, \masASV)$}}%
	\label{fig:upp-async-voting-composition-abstract}
\end{figure}

\begin{figure}[t]
	\centering
	\includegraphics[width=\linewidth]{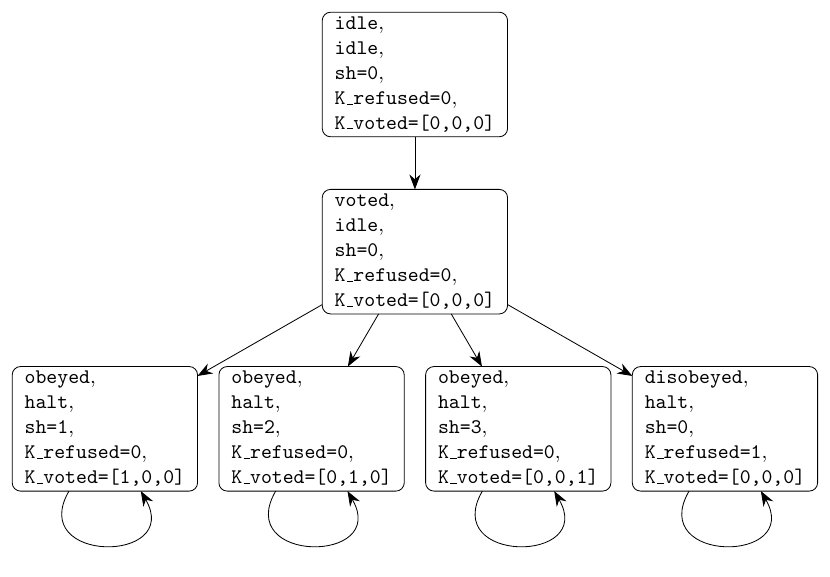}
    \vspace{-0.8cm}
	\caption{Unwrapping for the over-approximating abstraction $\abstr^{\may}_{\set{x}}(\masASV)\ =\ \multiset{\Var_{sh},\ \abstr^{\may}_{\set{x}}(\agsym_{\textit{Voter}},\masASV),\ \agsym_{\textit{Coercer}})}$}
	\label{fig:upp-async-voting-abstract-lts}
\end{figure}

A more general variant of variable abstraction assumes a collection of mappings $\mapf=\{f_1,\ldots,f_m\}$.
Each mapping $f_i:\Eval(X_i)\mapsto\Eval(z_i)$ merges the local variables $X_i\subseteq\Var_j$ of some agent graph $\agsym_j$ to a fresh variable $z_i$.
The abstraction based on $f_i$ removes variables $X_i$ from graph $\agsym_j$, and replaces them with $z_i$ that ``clusters'' the values of $X_i$ into appropriate abstraction classes.
We will use $\Argsd(f_i)=X_i$ and $\Argsd(\mapf)=\bigcup_{i=1}^{m}\Argsd(f_i)$ to refer to the variables removed by $f_i$ and $\mapf$.
$\Argsr(f_i)=\{z_i\}$ and $\Argsr(\mapf)=\bigcup_{i=1}^{m}\Argsr(f_i)$ refer to the new variables.

Note that the procedure in Section~\ref{sec:abstraction-by-removal} can be seen as a special case, with a sole mapping $f$ merging $X=\set{x}$ to a fresh variable $z$ with the singleton domain $\Dom(z)=\set{\eta_0(x)}$.

\extended{

\begin{algorithm}[t]
    \SetAlgoLined\DontPrintSemicolon

    \myproc{\computeMergeAbstraction{$\agsym$, ${d}\triangleright \agsym.\Var$, $\mapf$}}{
      $Y:=\Argsd(\mapf)\cap \agsym.\Var$\;
      $Z:=\Argsr(\mapf)$\;
      $g_0:=g_0\wedge(Z=f(\restr{\eta_0}{Y}))$\;
      $\agedges_a:=\varnothing$\;
      \ForEach{$l\xhookrightarrow{g:ch\,\alpha} l'$}{
        \ForEach{$\vec{c}\in \restr{d}{Y\times\agsym.\Loc}(B,l)$}{
          $g'=g[B=\vec{c}]$\;
          $\alpha':=\tau$\;
          \uIf{$g'=\bot$}{
            \Continue
          }
          \text{let }$\alpha=\alpha^{(1)}\ldots\alpha^{(m)}$\;
          \For{$i=1$ \KwTo $m$}{
             \ForEach{$f\in\mapf \wedge \agowner(f)=G$}{
               $X:=\Argsd(f)$\;
               $z:=\Argsr(f)$\;
               \uIf{ $x \in X\cap \text{LHS}(\alpha^{(i)})\neq \varnothing$}{
                $x':=\RHS(\alpha^{(i)})[X=\vec{c}]$\;
                $\alpha':=\alpha'.(z:=f(\vec{c}[x=x']))$\;
               }
             }
             \uIf{ $Y \cap \text{LHS}(\alpha^{(i)}) = \varnothing$}{
                $\alpha':=\alpha'.\alpha^{(i)}[Y=\vec{c}]$\;
             }
            }
          $\agedges_a := \agedges_a \cup \{ l\xhookrightarrow{g':ch\,\alpha'}l' \}$
        }
      }
      $G.\agedges := \agedges_a$\;
      \KwRet{$\agsym$}
    }
    \caption{Variable mapping abstraction}
    \label{alg:merge-abstraction}
  \end{algorithm} %

  The pseudocode for merging abstraction on an \agname $G$ is shown in \autoref{alg:merge-abstraction}.
}%

\subsection{Restricting the Scope of Abstraction}\label{sec:restricted-scope-abstraction}

The most general variant concerns a set $\mapf=\{(f_1,\Scope_1),\ldots,(f_m,\Scope_m)\}$, where each $f_i:\Eval(X_i)\mapsto\Eval(z_i)$ is a variable mapping in some graph $\agsym_j$, and $\Scope_i\subseteq\Loc_j$ defines its scope.
In the locations $l\in Sc_i$, mapping $f_i$ is applied and the values of all $v\in X_i$ are set to $v_0$ (depending on the needs, different value can be fixed per location).
Outside of $\Scope_i$, the variables in $X_i$ stay intact, and the new variable $z_i$ is assigned an arbitrary default value.

The resulting procedure is presented in Appendix~\ref{sec:general-abstraction}.

The abstract agent graph obtained by function \generalVarAbstraction from $\agsym$ in the context of $\magsym$ via $\mapf$ is denoted by $\abstr_{\mapf}(\agsym,\magsym)$.
The abstraction of the whole MAS graph $\magsym = \multiset{\Var_{sh},\agsym_1,\dots,\agsym_n}$ is defined (in accordance with \autoref{alg:abstraction-idea}) as
$$\abstr_{\mapf}(\magsym) = \multiset{\Var_{sh},\abstr_{\mapf}(\agsym_1,,\magsym)\dots,\abstr_{\mapf}(\agsym_n,\magsym)}$$

\begin{figure*}[t]\centering
\begin{subfigure}[c]{\columnwidth}
	\centering
	\includegraphics[width=\maxfitwidth]{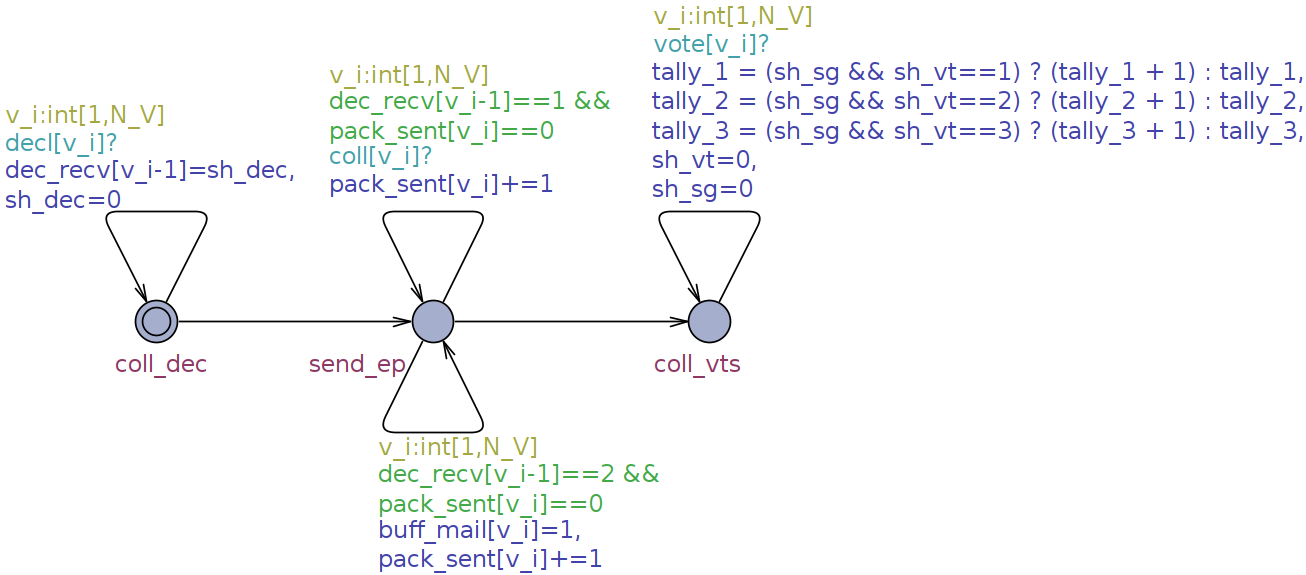}
	\caption{Election Authority graph}
	\label{fig:upp-simple-a}
\end{subfigure}
\hspace{0.6cm}
\begin{subfigure}[c]{\columnwidth}
	\centering
	\includegraphics[width=\maxfitwidth]{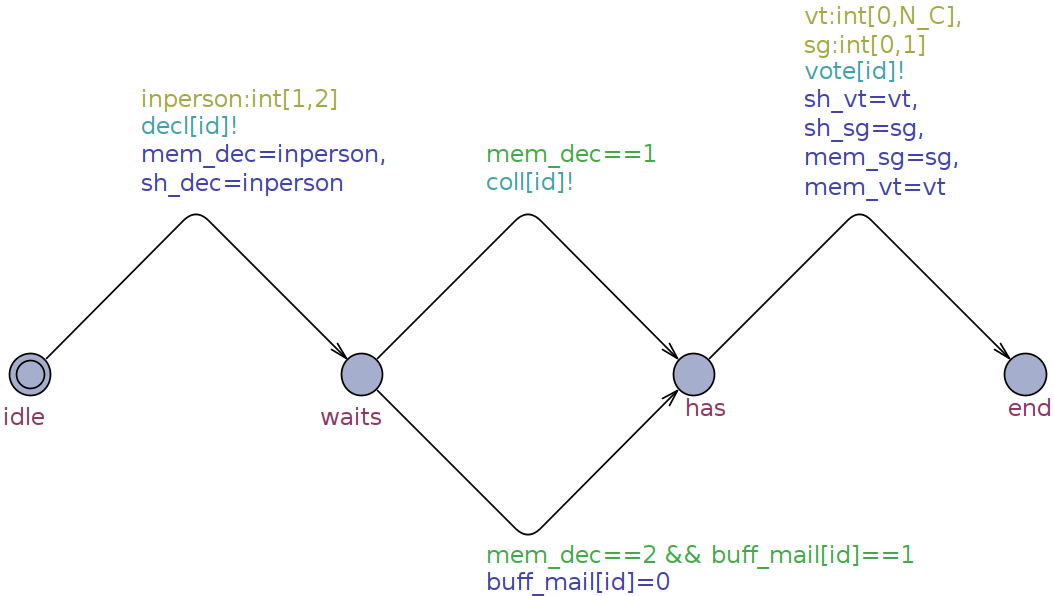}
	\caption{Voter graph}
	\label{fig:upp-simple-v}
\end{subfigure}
\caption{\magname\ for simplified postal voting}
\label{fig:upp-simple}
\end{figure*}

\newcommand{\scaletables}{0.84\textwidth}

\begin{table*}[t]\centering
\resizebox{\scaletables}{!}{%
\tiny

\begin{tabular}{l|rr|rrr|rrr|rrr}
	\toprule
	conf  & \multicolumn{2}{c|}{Concrete}               & \multicolumn{3}{c|}{Abstract 1}             & \multicolumn{3}{c|}{Abstract 2}             & \multicolumn{3}{c}{Abstract 3}                                                                                                                                    \\
	NV,NC & \#St                                        & tv (sec)                                 & ta (sec)                                 & \#St                                       & tv (sec)                                 & ta (sec) & \#St    & tv (sec) & ta (sec) & \#St    & tv (sec) \\
	\hline
	1,1   & 2.30e+1                                     & 0                                           & 0.03                                        & 1.90e+1                                    & 0                                           & 0.07        & 1.80e+1 & 0        & 0.16        & 1.60e+1 & 0           \\
	1,2   & 2.70e+1                                     & 0                                           & 0.03                                        & 2.10e+1                                    & 0                                           & 0.08        & 2.00e+1 & 0        & 0.06        & 1.70e+1 & 0           \\
	1,3   & 3.10e+1                                     & 0                                           & 0.03                                        & 2.30e+1                                    & 0                                           & 0.06        & 2.20e+1 & 0        & 0.05        & 1.80e+1 & 0           \\
	\hline
	2,1   & 2.41e+2                                     & 0                                           & 0.02                                        & 1.41e+2                                    & 0                                           & 0.06        & 1.26e+2 & 0        & 0.06        & 9.30e+1 & 0           \\
	2,2   & 3.69e+2                                     & 0                                           & 0.02                                        & 1.77e+2                                    & 0                                           & 0.04        & 1.66e+2 & 0        & 0.03        & 1.06e+2 & 0           \\
	2,3   & 5.29e+2                                     & 0                                           & 0.02                                        & 2.17e+2                                    & 0                                           & 0.06        & 2.14e+2 & 0        & 0.04        & 1.20e+2 & 0           \\
	\hline
	3,1   & 2.99e+3                                     & 0.01                                        & 0.02                                        & 1.14e+3                                    & 0                                           & 0.07        & 9.72e+2 & 0.01     & 0.05        & 5.67e+2 & 0           \\
	3,2   & 6.08e+3                                     & 0.01                                        & 0.02                                        & 1.62e+3                                    & 0.01                                        & 0.05        & 1.57e+3 & 0        & 0.04        & 6.93e+2 & 0           \\
	3,3   & 1.09e+4                                     & 0.04                                        & 0.02                                        & 2.20e+3                                    & 0.02                                        & 0.03        & 2.44e+3 & 0        & 0.05        & 8.38e+2 & 0.01        \\
	\hline
	4,1   & 3.98e+4                                     & 0.12                                        & 0.02                                        & 9.57e+3                                    & 0.05                                        & 0.08        & 7.94e+3 & 0.03     & 0.08        & 3.54e+3 & 0.02        \\
	4,2   & 1.06e+5                                     & 0.55                                        & 0.01                                        & 1.52e+4                                    & 0.08                                        & 0.08        & 1.60e+4 & 0.05     & 0.06        & 4.62e+3 & 0.04        \\
	4,3   & 2.36e+5                                     & 0.95                                        & 0.01                                        & 2.26e+4                                    & 0.12                                        & 0.08        & 2.99e+4 & 0.07     & 0.08        & 5.94e+3 & 0.06        \\
	\hline
	5,1   & 5.46e+5                                     & 1.48                                        & 0.02                                        & 8.17e+4                                    & 0.36                                        & 0.19        & 6.71e+4 & 0.18     & 0.25        & 2.23e+4 & 0.13        \\
	5,2   & 1.90e+6                                     & 6.42                                        & 0.02                                        & 1.43e+5                                    & 0.76                                        & 0.18        & 1.69e+5 & 0.5      & 0.23        & 3.09e+4 & 0.23        \\
	5,3   & 5.16e+6                                     & 24.95                                       & 0.02                                        & 2.30e+5                                    & 1.43                                        & 0.22        & 3.79e+5 & 1.16     & 0.22        & 4.21e+4 & 0.39        \\
	\hline
	6,1   & 7.58e+6                                     & 31.34                                       & 0.01                                        & 7.03e+5                                    & 4.39                                        & 0.55        & 5.79e+5 & 1.92     & 0.44        & 1.41e+5 & 0.92        \\
	6,2   & 3.41e+7                                     & 170.25                                      & 0.01                                        & 1.34e+6                                    & 10.87                                       & 0.50        & 1.82e+6 & 7.64     & 0.40        & 2.07e+5 & 1.83        \\
	6,3   & \multicolumn{2}{c|}{\tiny{\texttt{memout}}} & 0.01                                        & 2.31e+6                                     & 20.31                                      & 0.84                                        & 4.87e+6     & 22.67   & 0.40     & 2.97e+5     & 4.7                   \\
	\hline
	7,1   & \multicolumn{2}{c|}{\tiny{\texttt{memout}}} & 0.01                                        & 6.05e+6                                     & 46.75                                      & 2.34                                        & 5.07e+6     & 22.16   & 1.91     & 8.89e+5     & 8.34                  \\
	7,2   & \multicolumn{2}{c|}{\tiny{\texttt{memout}}} & 0.02                                        & 1.25e+7                                     & 149.84                                     & 1.33                                        & 1.98e+7     & 107.95  & 2.01     & 1.38e+6     & 16.11                 \\
	7,3   & \multicolumn{2}{c|}{\tiny{\texttt{memout}}} & 0.02                                        & 2.28e+7                                     & 304.86                                     & 2.49 &\multicolumn{2}{c|}{\tiny{\texttt{memout}}} & 2.35        & 2.08e+6 & 30.75                                          \\
	\hline
	8,1   & \multicolumn{2}{c|}{\tiny{\texttt{memout}}} & 0.02                                        & 5.20e+7                                     & 482.66                                     & 10.30 & \multicolumn{2}{c|}{\tiny{\texttt{memout}}} & 8.04        & 5.61e+6 & 66.44                                          \\
	8,2   & \multicolumn{2}{c|}{\tiny{\texttt{memout}}} & 0.19 & \multicolumn{2}{c|}{\tiny{\texttt{memout}}} & 12.17 &  \multicolumn{2}{c|}{\tiny{\texttt{memout}}} & 7.58                                      & 9.15e+6                                     & 150.86                                                                 \\
	8,3   & \multicolumn{2}{c|}{\tiny{\texttt{memout}}} & 0.07 & \multicolumn{2}{c|}{\tiny{\texttt{memout}}} & 9.52 & \multicolumn{2}{c|}{\tiny{\texttt{memout}}} & 7.99                                      & 1.44e+7                                     & 348.99                                                                 \\
	\hline
	9,1   & \multicolumn{2}{c|}{\tiny{\texttt{memout}}} & 0.12 &\multicolumn{2}{c|}{\tiny{\texttt{memout}}} & 70.49 & \multicolumn{2}{c|}{\tiny{\texttt{memout}}} & 64.96                                      & 3.53e+7                                     & 474.43                                                                 \\
	9,2   & \multicolumn{2}{c|}{\tiny{\texttt{memout}}} & 0.06 &\multicolumn{2}{c|}{\tiny{\texttt{memout}}} & 68.46 & \multicolumn{2}{c|}{\tiny{\texttt{memout}}} & 71.69  & \multicolumn{2}{c}{\tiny{\texttt{memout}}}                                                                                                                        \\
	9,3   & \multicolumn{2}{c|}{\tiny{\texttt{memout}}} & 0.12 &\multicolumn{2}{c|}{\tiny{\texttt{memout}}} & 70.09 & \multicolumn{2}{c|}{\tiny{\texttt{memout}}} & 72.72  & \multicolumn{2}{c}{\tiny{\texttt{memout}}}                                                                                                                        \\
	\bottomrule
\end{tabular}

}%
\caption{Experimental results for model checking of $\varphi_\mathit{bstuff}$ in over-approximations of postal voting}
\label{tab:benchmarks}
\end{table*}

\begin{table*}[t]\centering
\resizebox{\scaletables}{!}{%
\tiny

\begin{tabular}{l|rr|rrr|rrr|rrr}
\toprule
conf  & \multicolumn{2}{c|}{Concrete} & \multicolumn{3}{c|}{Abstract 1} & \multicolumn{3}{c|}{Abstract 2} & \multicolumn{3}{c}{Abstract 3} \\
NV,NC & \#St          & tv (sec)    & ta (sec)  & \#St     & tv (sec) & ta (sec)& \#St     & tv (sec) & ta (sec) & \#St      & tv (sec) \\
\hline
1,1   & 23            & 0          & 0.03     & 15          & 0       & 0.07      & 14       & 0       & 0.08       & 14        & 0       \\
1,2   & 27            & 0          & 0.03     & 15          & 0       & 0.05      & 14       & 0       & 0.06       & 14        & 0.01    \\
1,3   & 31            & 0          & 0.03     & 15          & 0       & 0.05      & 14       & 0       & 0.04       & 14        & 0       \\
\hline
2,1   & 241           & 0          & 0.01     & 81          & 0       & 0.04      & 70       & 0       & 0.04       & 70        & 0       \\
2,2   & 369           & 0          & 0.01     & 81          & 0       & 0.02      & 70       & 0       & 0.03       & 70        & 0       \\
2,3   & 529           & 0          & 0.03     & 81          & 0       & 0.02      & 70       & 0       & 0.04       & 70        & 0       \\
\hline
3,1   & 2987          & 0          & 0.01     & 459         & 0       & 0.03      & 368      & 0       & 0.03       & 368       & 0       \\
3,2   & 6075          & 0          & 0.02     & 459         & 0       & 0.03      & 368      & 0       & 0.03       & 368       & 0       \\
3,3   & 1.09e+4       & 0          & 0.02     & 459         & 0       & 0.03      & 368      & 0       & 0.03       & 368       & 0       \\
\hline
4,1   & 3.98e+4       & 0          & 0.01     & 2673        & 0       & 0.03      & 2002     & 0.01    & 0.04       & 2002      & 0       \\
4,2   & 1.06e+5       & 0          & 0.01     & 2673        & 0       & 0.05      & 2002     & 0       & 0.03       & 2002      & 0       \\
4,3   & 2.36e+5       & 0          & 0.01     & 2673        & 0       & 0.04      & 2002     & 0       & 0.03       & 2002      & 0       \\
\hline
5,1   & 5.46e+5       & 0          & 0.01     & 1.58e+4     & 0       & 0.04      & 1.11e+4  & 0       & 0.06       & 1.11e+4  & 0       \\
5,2   & 1.90e+6       & 0          & 0.01     & 1.58e+4     & 0       & 0.06      & 1.11e+4  & 0       & 0.05       & 1.11e+4  & 0       \\
5,3   & 5.16e+6       & 0          & 0.02     & 1.58e+4     & 0       & 0.07      & 1.11e+4  & 0       & 0.05       & 1.11e+4  & 0.01    \\
\hline
6,1   & 7.58e+6       & 0          & 0.01     & 9.40e+4     & 0       & 0.15      & 6.30e+4  & 0       & 0.09       & 6.30e+4  & 0       \\
6,2   & 3.41e+7       & 0.01       & 0.01     & 9.40e+4     & 0       & 0.14      & 6.30e+4  & 0       & 0.10       & 6.30e+4  & 0       \\
6,3   & 1.13e+8       & 0          & 0.01     & 9.40e+4     & 0       & 0.09      & 6.30e+4  & 0       & 0.09       & 6.30e+4  & 0       \\
\hline
7,1   & 1.06e+8       & 0          & 0.01     & 5.62e+5     & 0.01    & 0.28      & 3.60e+5  & 0       & 0.24       & 3.60e+5  & 0       \\
7,2   & $\gg$1e+8     & 0          & 0.01     & 5.62e+5     & 0       & 0.34      & 3.60e+5  & 0       & 0.21       & 3.60e+5  & 0       \\
7,3   & $\gg$1e+8     & 0.01       & 0.01     & 5.62e+5     & 0       & 0.35      & 3.60e+5  & 0       & 0.23       & 3.60e+5  & 0       \\
\hline
8,1   &  $\gg$1e+8    & 0          & 0.01     & 3.37e+6     & 0       & 0.90      & 2.08e+6  & 0       & 0.69       & 2.08e+6  & 0       \\
8,2   &  $\gg$1e+8    & 0          & 0.02     & 3.37e+6     & 0       & 1.03      & 2.08e+6  & 0       & 0.63       & 2.08e+6  & 0       \\
8,3   &  $\gg$1e+8    & 0          & 0.01     & 3.37e+6     & 0       & 0.86      & 2.08e+6  & 0       & 0.55       & 2.08e+6  & 0       \\
\hline
9,1   &  $\gg$1e+8    & 0          & 0.01     & 2.02e+7     & 0       & 4.41      & 1.21e+7  & 0       & 2.43       & 1.21e+7  & 0       \\
9,2   &  $\gg$1e+8    & 0          & 0.01     & 2.02e+7     & 0       & 2.80      & 1.21e+7  & 0       & 2.03       & 1.21e+7  & 0       \\
9,3   &  $\gg$1e+8    & 0          & 0.01     & 2.02e+7     & 0       & 2.69      & 1.21e+7  & 0       & 1.99       & 1.21e+7  & 0       \\
\hline
10,1  &  $\gg$1e+8    & 0          & 0.01     & 1.21e+8     & 0       & 9.61      & 7.03e+7  & 0       & 7.49       & 7.03e+7   & 0.01    \\
10,2  &  $\gg$1e+8    & 0          & 0.01     & 1.21e+8     & 0       & 7.83      & 7.03e+7  & 0       & 8.02       & 7.03e+7   & 0       \\
10,3  &  $\gg$1e+8    & 0          & 0.01     & 1.21e+8     & 0       & 8.99      & 7.03e+7  & 0       & 7.71       & 7.03e+7   & 0      \\
\bottomrule
\end{tabular}

}%
\caption{Experimental results for model checking of $\varphi_\mathit{dispatch}$ in under-approximations of postal voting}
\label{tab:benchmarks2}
\end{table*}

\section{\mbox{Correctness of Variable Abstraction}}\label{sec:correctness}

We will now prove that the abstraction scheme\extended{, proposed in Section~\ref{sec:abstraction},} preserves the formulae of \ACTLs if the computation of variable domain $d$ produces an upper-approximation of their reachable values.
In essence, we show that the abstraction always produces an over-approximation of the MAS graph, which induces an appropriate simulation relation, and thus guarantees (one-way) preservation of \ACTLs.

\subsection{Simulations between Models}\label{sec:simulations}

We first recall a notion of simulation between models\extended{, that preserves the satisfied formulae of \ACTLs}~\cite{Baier08mcheck,Clarke18principles,Cohen09abstraction-MAS}.

\begin{definition}\label{def:simulation-relation}
Let $M_i=(\textit{St}_i, I_i, \longrightarrow_i, \textit{AP}_i, L_i)$, $i=1,2$ be a pair of models, and let $\textit{AP}\subseteq \textit{AP}_1\cap\textit{AP}_2$ be a subset of common atomic propositions.
Model $M_2$ \emph{simulates} model $M_1$ over $\textit{AP}$ (written $M_1\simul_{\textit{AP}} M_2$) if there exists a \emph{simulation} relation $\mathcal{R}\subseteq \textit{St}_1\times\textit{St}_2$ over $\textit{AP}$, s.t.:
\begin{enumerate}[label=(\roman*)]
	\item for every $s_1\in I_1$, there exists $s_2\in I_2$ with $s_1\mathcal{R}s_2$;
	\item for each $(s_1,s_2)\in\mathcal{R}$:
	\begin{enumerate}[label=(\alph*)]
		\item $L_1(s_1)\cap\textit{AP} = L_2(s_2)\cap\textit{AP}$,\quad and
		\item if $s_1\to s'_1$ then there exists $s_2\to s'_2$ such that $s'_1\mathcal{R}s'_2$,
	\end{enumerate}
\end{enumerate}

Additionally, for a pair of reachable states $s_1,s_2$ in $M_1,M_2$ such that $(s_1,s_2)\in\mathcal{R}$, we say that the pointed model $(M_2,s_2)$ \emph{simulates} $(M_1,s_1)$ over $\textit{AP}$, and denote it by $(M_1,s_1)\simul_{\textit{AP}} (M_2,s_2)$.
\end{definition}
\extended{
  By $\mathcal{R}(s_1)=\{s_2\in St_2\ |\ (s_1,s_2)\in \mathcal{R}\}$ we denote the set of states in $M_2$ {simulating} state $s_1$, %
  and by $\mathcal{R}^{-1}(s_2)\{s_1\in St_1\ |\ (s_1,s_2)\in \mathcal{R}\}$ the set of states in $M_1$ {simulated by} $s_2$.%
}%

\begin{theorem}\label{prop:simulation-implies}
	For $(M_1,s_1)\simul_{\textit{AP}} (M_2,s_2)$ and any $\actl^\ast$ state formula $\psi$, built of propositions from $AP$ only, it holds that:
	\begin{equation}\label{eq:lemma-actl-ih}
		M_2,s_2\srel \psi\text{\quad implies\quad }M_1,s_1\srel \psi \tag{$\ast$}
	\end{equation}
\end{theorem}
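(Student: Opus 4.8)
The plan is to prove, by simultaneous structural induction, a two-part statement slightly stronger than the theorem's implication: \textbf{(S)} for every $\ACTLs$ state formula $\psi$ over $AP$ and every pair $(s_1,s_2)\in\mathcal{R}$, $M_2,s_2\models\psi$ implies $M_1,s_1\models\psi$; and \textbf{(P)} for every $\ACTLs$ path formula $\varphi$ over $AP$ and every pair of paths $\pi_1\in\textit{Paths}(M_1)$, $\pi_2\in\textit{Paths}(M_2)$ that are \emph{$\mathcal{R}$-matched}, meaning $\pi_1[k]\mathrel{\mathcal{R}}\pi_2[k]$ for all $k\ge 0$, $M_2,\pi_2\models\varphi$ implies $M_1,\pi_1\models\varphi$. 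The theorem is then exactly the instance of (S) applied to the given reachable pair $(s_1,s_2)$.

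The one auxiliary fact I would isolate first is a \emph{path-lifting lemma}: if $s_1\mathrel{\mathcal{R}}s_2$ and $\pi_1 = s^0_1 s^1_1\dots\in\textit{Paths}(s_1)$, then there is a path $\pi_2\in\textit{Paths}(s_2)$ that is $\mathcal{R}$-matched with $\pi_1$. It is built stepwise: put $s^0_2 := s_2$; given $s^k_2$ with $s^k_1\mathrel{\mathcal{R}}s^k_2$, clause (ii)(b) of \autoref{def:simulation-relation} applied to the transition $s^k_1\to_1 s^{k+1}_1$ supplies some $s^{k+1}_2$ with $s^k_2\to_2 s^{k+1}_2$ and $s^{k+1}_1\mathrel{\mathcal{R}}s^{k+1}_2$ (one choice per step, so dependent choice suffices). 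The sequence $\pi_2 = s^0_2 s^1_2\dots$ is an infinite run, hence a path. I would also record the trivial but repeatedly used observation that $\mathcal{R}$-matching is inherited by suffixes: if $\pi_1,\pi_2$ are $\mathcal{R}$-matched, so are $\pi_1[j,\infty]$ and $\pi_2[j,\infty]$ for every $j$.

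For (S): $\top$ and $\bot$ are immediate, the latter vacuously. For $a\in AP$, $M_2,s_2\models a$ means $a\in L_2(s_2)$, and since $a\in AP$, clause (ii)(a) ($L_1(s_1)\cap AP = L_2(s_2)\cap AP$) gives $a\in L_1(s_1)$; the case $\neg a$ is the same, deriving $a\notin L_1(s_1)$ from $a\notin L_2(s_2)$ — this is the step that forces negations to sit on atoms. The $\wedge$ and $\vee$ cases follow directly from the induction hypothesis. For $\psi = \Apath\varphi$: given any $\pi_1\in\textit{Paths}(s_1)$, the path-lifting lemma yields an $\mathcal{R}$-matched $\pi_2\in\textit{Paths}(s_2)$; from $M_2,s_2\models\Apath\varphi$ we get $M_2,\pi_2\models\varphi$, and (P) gives $M_1,\pi_1\models\varphi$; since $\pi_1$ was arbitrary, $M_1,s_1\models\Apath\varphi$. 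For (P): when $\varphi$ is a state formula $\psi$, $M_2,\pi_2\models\psi$ says $M_2,\pi_2[0]\models\psi$, and $\pi_1[0]\mathrel{\mathcal{R}}\pi_2[0]$ lets (S) conclude $M_1,\pi_1[0]\models\psi$; the Boolean cases are immediate; for $\Next\varphi'$ one uses that $\pi_1[1,\infty],\pi_2[1,\infty]$ are $\mathcal{R}$-matched; for $\varphi_1\Until\varphi_2$, take the witness index $j$ for $\pi_2$ and transfer, via the induction hypothesis on the $\mathcal{R}$-matched suffix pairs $\pi_1[i,\infty],\pi_2[i,\infty]$, both $M_2,\pi_2[j,\infty]\models\varphi_2$ and $M_2,\pi_2[i,\infty]\models\varphi_1$ for all $1\le i<j$, obtaining $M_1,\pi_1\models\varphi_1\Until\varphi_2$ with the same $j$.

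There is no deep obstacle: this is the textbook argument that simulation preserves $\ACTLs$ downwards, and it is one-directional precisely because $\ACTLs$ has only the $\Apath$ quantifier and negations only on atoms. The only places requiring genuine care are that the path-lifting lemma is the sole point where the transition clause (ii)(b) enters and must be checked to deliver a genuinely \emph{infinite} path, and that the mutual induction closes only if the suffix-matching invariant is consistently maintained when descending through $\Next$ and $\Until$. I would therefore present the proof as the two small lemmas (path-lifting, suffix closure) followed by the induction, which is then essentially bookkeeping.
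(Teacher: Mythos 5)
Your proposal is correct and follows essentially the same route as the paper, which simply cites the standard argument (and, in its extended sketch, does the same structural induction with a path-lifting step via clause (ii)(b) of Definition~\ref{def:simulation-relation}). The only difference is presentational: you make the mutual induction on path formulae (with the suffix-matching invariant for $\Next$ and $\Until$) explicit, which fills in the step the paper's sketch compresses into the remark that matched paths agree on atomic propositions.
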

\short{The proof is standard, see e.g.~\cite{Baier08mcheck\extended{,Clarke18principles}}.}
\extended{
  \begin{proof}
	Suppose that $(M_1,s_1)\simul_{\textit{AP}} (M_2,s_2)$, and $\psi$ is a state formula of $\actl^\ast$ using only propositions in $AP$.
	We will show that \eqref{eq:lemma-actl-ih} holds by induction over the structure of $\psi$.
	
	\textit{Induction basis:}
	Let $\psi = a$, where $a\in AP$.
	If $M_2,s_2\srel \psi$ holds, then $a\in L(s_2)$.
	Since $L_1(s_1)\cap\textit{AP} = L_2(s_2)\cap\textit{AP}$ by definition,
	it follows that $a\in L(s_1)$ and thus $M_1,s_1\srel a$.
	
	\textit{Induction step:}
	Suppose that
	\eqref{eq:lemma-actl-ih} holds for state formulae $\psi_1$, $\psi_2$. Then:
	\begin{enumerate}[noitemsep,topsep=0pt,parsep=0pt,partopsep=0pt,label=(\alph*)]
		\item If
		$M_2,s_2 \srel \psi_1 \wedge \psi_2$,
		then $M_2,s_2\srel \psi_1$ and $M_2,s_2\srel \psi_2$.
		Thus
		$M_1,s_1\srel \psi_1$ and $M_1,s_1\srel \psi_2$, therefore
		$M_1,s_1\srel \psi_1 \wedge \psi_2$.
		\item If
		$M_2,s_2\srel \psi_1 \vee \psi_2$,
		then $M_2,s_2\srel \psi_1$ or $M_2,s_2\srel \psi_2$.
		Thus
		$M_1,s_1\srel \psi_1$ or $M_1,s_1\srel \psi_2$, therefore
		$M_1,s_1\srel \psi_1 \vee \psi_2$.
		
		\item If $M_2,s_2\srel \Apath \varphi$, where $\varphi$ is some path formula, then
		$\forall \pi\in\textit{Paths}(s_2).\ M_2,\pi_2\srel \varphi$.
		Assume that $\exists \pi_1\in \textit{Paths}(s_1)\ .\ M_1,\pi_1\nsrel \varphi$
		and therefore $s_1\nsrel \Apath\varphi$.
		Notice that using (i) and (ii-b) of %
		\autoref{def:simulation-relation} we can inductively construct
		$\pi_2\in\textit{Paths}(s_2)$ corresponding to $\pi_1$, such that
		$\pi_1[i]\mathcal{R}\pi_2[i]$ for all
		$i=1,\ldots,\len(\pi_1)$.
		From (ii-b), all those pairs of states must satisfy the same
		set of atomic properties,
		it must be that $M_2,\pi_2\nsrel\varphi$ and thus $M_2,s_2\nsrel \Apath\varphi$, which
		is
		a contradiction.
	\end{enumerate}
  \end{proof}
}%

\subsection{Over-Approximations of MAS Graphs}\label{sec:overapprox}

Let $M_1=\unwrap(\magsym_1), M_2=\unwrap(\magsym_2)$ be models resulting from unwrapping of MAS graphs $\magsym_1,\magsym_2$.
We start with a notion of simulation between states and runs, and then use that to define what it means for a $M_2$ to be a may-approximation of $M_1$.

\begin{definition}\label{def:matching}%
State $s_2=\abracket{l_2, \eta_2}$ \emph{simulates} state $s_1=\abracket{l_1, \eta_1}$ with respect to variables $V\subseteq \Var_1\cap\Var_2$ (denoted $s_1\overapprox_V s_2$) iff $l_1=l_2$ and for all $v\in V$ we have ${\eta_1}(v)={\eta_2}(v)$.

Moreover, run $\pi_2\in\textit{Runs}(M_2)$ \emph{simulates} run $\pi_1\in \textit{Runs}(M_1)$ with respect to $V$ (denoted $\pi_1\overapprox_V \pi_2$) iff:
\begin{enumerate}[label=(\roman*)]
	\item $\len(\pi_1)=\len(\pi_2)=t$,\quad and
	\item for every $1\leq i\leq t$, it holds that $\pi_1[i]\overapprox_V \pi_2[i]$.
\end{enumerate}

\end{definition}

The following is straightforward.
\begin{lemma}\label{lemma:matching-rel-lift}
If $\pi_1\overapprox_V\pi_2$ and their length is $t$, then
for all $1\leq i<j\leq t$
it holds that $\pi_1[i,j]\overapprox_V\pi_2[i,j]$.
\end{lemma}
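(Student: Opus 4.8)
The plan is to prove Lemma~\ref{lemma:matching-rel-lift} directly from Definition~\ref{def:matching}, since it is essentially a bookkeeping statement about how the relation $\overapprox_V$ on runs restricts to sub-runs. First I would unfold the hypothesis $\pi_1\overapprox_V\pi_2$: by clause~(i) both runs have the same length $t$, and by clause~(ii) we have $\pi_1[k]\overapprox_V\pi_2[k]$ for every $1\leq k\leq t$. Fix indices $1\leq i<j\leq t$ and let $\sigma_1=\pi_1[i,j]$ and $\sigma_2=\pi_2[i,j]$ be the corresponding run fragments. I need to check the two defining conditions of $\sigma_1\overapprox_V\sigma_2$.

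\textbf{Length.} Both $\sigma_1$ and $\sigma_2$ are the fragments from position $i$ to position $j$ of runs of length $t$, so each has length $j-i$; condition~(i) holds with the common length $j-i$. (One should note $\sigma_1,\sigma_2$ are genuine runs of $M_1,M_2$ respectively, since a contiguous fragment of a run is again a run — immediate from Definition~\ref{def:path}.)

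\textbf{State-wise simulation.} For each offset $0\leq m\leq j-i$, the $m$-th state of $\sigma_1$ is $\pi_1[i+m]$ and the $m$-th state of $\sigma_2$ is $\pi_2[i+m]$. Since $1\leq i+m\leq j\leq t$, clause~(ii) of the hypothesis gives $\pi_1[i+m]\overapprox_V\pi_2[i+m]$, i.e.\ $\sigma_1[m]\overapprox_V\sigma_2[m]$. Hence condition~(ii) for $\sigma_1\overapprox_V\sigma_2$ holds, and we conclude $\pi_1[i,j]\overapprox_V\pi_2[i,j]$.

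There is no real obstacle here — the only thing to be slightly careful about is the indexing convention for $\pi[i,j]$ (whether it is inclusive on both ends and whether its length counts edges or states), so that the reindexing between absolute positions in $\pi$ and positions within the fragment is stated consistently with Definition~\ref{def:path}; everything else is a direct appeal to the definition. This is exactly why the paper flags the lemma as ``straightforward.''
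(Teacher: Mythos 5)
Your proof is correct and is exactly the ``straightforward'' argument the paper has in mind: the paper offers no explicit proof, and the statement follows, as you show, by unfolding Definition~\ref{def:matching} — the fragments have equal length, and the state-wise condition transfers under the obvious reindexing. Your remarks on the indexing convention and on fragments of runs being runs are appropriate care, not a divergence in method.
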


\begin{lemma}\label{lemma:over-approx-sim}
For pointed models $(M_1,s_1)$ and $(M_2,s_2)$, s.t. $s_1=\abracket{l,\eta_1}\in\textit{St}_1$ and $s_2=\abracket{l,\eta_2}\in\textit{St}_2$ it holds that:
\begin{center}
	$(M_1,s_1)\overapprox_V (M_2, s_2)$ implies $(M_1,s_1)\simul_{\mathit{AP(V)}} (M_2, s_2)$.
\end{center}
\end{lemma}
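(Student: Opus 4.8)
The goal is to show that the run-wise over-approximation relation $\overapprox_V$ between two unwrapped models yields a simulation in the sense of \autoref{def:simulation-relation}, over the proposition set $AP(V)$. The natural candidate for the simulation relation is
$$\mathcal{R} = \{(s_1,s_2)\in\textit{St}_1\times\textit{St}_2 \mid s_1\overapprox_V s_2\},$$
i.e.\ pairs of states sharing the same location and agreeing on the values of all variables in $V$. The plan is to verify the three conditions of \autoref{def:simulation-relation} for this $\mathcal{R}$, starting from the hypothesis $(M_1,s_1)\overapprox_V(M_2,s_2)$, which already puts the distinguished pair into $\mathcal{R}$.

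First I would check condition (ii-a), the agreement on labels restricted to $AP(V)$. Since $AP = \Loc\cup\Cond$ in an unwrapping and $AP(V)$ keeps only those propositions whose variables all lie in $V$, the location component of the label is always in $AP(V)$, and $l_1=l_2$ gives agreement there; for a guard $g\in\Cond$ with $\mathcal{V}(g)\subseteq V$, the truth of $\eta\models g$ depends only on $\eta\!\restriction\! V$, and $\eta_1$ and $\eta_2$ agree on $V$, so $g\in L_1(s_1)\Leftrightarrow g\in L_2(s_2)$. This is the routine part.

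The substantive step is condition (ii-b): given $(s_1,s_2)\in\mathcal{R}$ and a transition $s_1\to_1 s'_1$ in $M_1$, I must produce $s_2\to_2 s'_2$ in $M_2$ with $s'_1\overapprox_V s'_2$. Here is where I expect the main obstacle, and where the hypothesis has to be strengthened or a standing assumption invoked: \autoref{lemma:over-approx-sim} as stated is \emph{not} true for arbitrary pairs $(M_1,s_1),(M_2,s_2)$ agreeing on $V$ — it holds only when $M_2=\unwrap(\magsym_2)$ is genuinely a may-approximation of $M_1=\unwrap(\magsym_1)$ (cf.\ the section title ``Over-Approximations of MAS Graphs'' and the motivating text in Section~\ref{sec:abstraction} saying ``every transition in $\magsym$ has its counterpart in $\abstr^{\may}$''). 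So I anticipate that the actual statement to be proved either assumes, or has just defined immediately before, that every run of $M_1$ is simulated (in the $\overapprox_V$ sense of \autoref{def:matching}) by some run of $M_2$. Granting that, condition (ii-b) follows: take the length-$2$ run $\pi_1 = s_1 s'_1$, lift it to a run $\pi_2 = s_2' s_2''$ of $M_2$ with $\pi_1\overapprox_V\pi_2$; by \autoref{def:matching}(ii) we get $s_1\overapprox_V s_2'$ and $s'_1\overapprox_V s''_2$. The only remaining nuisance is that the run-lifting may a priori start at a different state $s_2'\neq s_2$ with $s_1\overapprox_V s_2'$; one then needs that the choice of the $V$-agreeing predecessor is immaterial — which is exactly what the run-simulation hypothesis, applied to runs starting at $s_2$ itself, delivers, so I would phrase the lifting hypothesis as: for every $\pi_1\in\textit{Runs}(M_1)$ and every $s_2$ with $\pi_1[1]\overapprox_V s_2$ there is $\pi_2\in\textit{Runs}(M_2)$ with $\pi_2[1]=s_2$ and $\pi_1\overapprox_V\pi_2$.

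Finally, condition (i) — every initial state of $M_1$ is related to some initial state of $M_2$ — follows from the assumption of a unique default initial evaluation $\eta_0$ together with $l_{0,1}=l_{0,2}$: the single initial state $\abracket{l_0,\eta_0}$ of $M_1$ is matched by the initial state of $M_2$, whose $V$-values coincide with $\eta_0$'s by construction of the abstraction (the new variables are fresh and the surviving ones keep their defaults). Having verified (i), (ii-a), (ii-b), $\mathcal{R}$ is a simulation over $AP(V)$, the distinguished pair lies in it and both states are reachable, so $(M_1,s_1)\simul_{AP(V)}(M_2,s_2)$ by \autoref{def:simulation-relation}, which is the claim. I would close by remarking that combined with \autoref{prop:simulation-implies} this already yields one-directional preservation of $\actls$, to be exploited in the next subsection.
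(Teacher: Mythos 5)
The paper states this lemma without giving a proof, so there is nothing official to compare against; judged on its own, your argument is the natural one and, I believe, the intended one: take $\mathcal{R}=\{(t_1,t_2)\mid t_1\overapprox_V t_2\}$ and check the conditions of \autoref{def:simulation-relation}, with (ii-a) holding because locations and guards built only from variables in $V$ depend only on the shared location and the $V$-part of the evaluations. Your interpretative move is also the right thing to flag: the pointed relation $(M_1,s_1)\overapprox_V(M_2,s_2)$ is never formally defined in the paper (only the state- and run-level relations of \autoref{def:matching} are), and bare state-wise agreement obviously cannot yield a simulation. Two caveats on how you close the gap. First, the strengthened lifting hypothesis you adopt --- every run of $M_1$ can be matched by a run of $M_2$ starting at \emph{any} prescribed $V$-agreeing state --- is not merely convenient but necessary: under the weaker reading ``every path from $s_1$ is matched by some path from $s_2$'', the conclusion would fail, since pointwise path matching is a trace-inclusion-style property and does not imply simulation (two states of $M_2$ may share location and $V$-values yet differ on abstracted or fresh variables, each able to continue along only one of two branches that $M_1$ offers from a single state). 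Your hypothesis is essentially the transfer property that the proof of \autoref{lemma:over-approx} establishes for every reachable pair, so the lemma's downstream use with the abstraction is unaffected, but it should be stated explicitly as the definition of the pointed relation rather than derived from it. Second, condition (i) of \autoref{def:simulation-relation} is a global requirement on initial states that the pointed hypothesis says nothing about; you discharge it by invoking the unique initial evaluation and the construction of $\abstr^{\may}$, i.e., by tacitly assuming that $M_2$ is an abstraction of $M_1$, whereas the lemma is phrased for arbitrary unwrappings --- acceptable in context, but it should be recorded as an assumption.
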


We emphasize that model's approximation is defined over the set of variables $V$, which, for its part, establishes the subset of common atomic propositions $AP(V)$ for the simulation relation.

\subsection{Variable Abstraction Is Sound}

We prove now that the abstraction method in Section~\ref{sec:abstraction}, based on upper-approximation of local domain, is indeed a simulation.

\begin{lemma}\label{lemma:over-approx-domain}
	Let $\magsym$ be a \magname and $d$ be an upper-approximation of a local domain for $B\subsetneq \Var$. Then, if state $\abracket{l,\eta}$ is reachable in $\mathcal{M}(\magsym)$, it must be that $\eta(B)\in d(B,l)$.
\end{lemma}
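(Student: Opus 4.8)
The plan is to argue by induction on the length of a run in $\mathcal{M}(\magsym)$ that witnesses the reachability of $\abracket{l,\eta}$, using the way \approxLocalDomain\ computes the upper-approximation $d$: it initializes $d(B,l)=\varnothing$ for every location and then, whenever an edge $l''\xhookrightarrow{g:\alpha}l$ is processed and a value $c\in d(B,l'')$ is already present, it adds to $d(B,l)$ every value $\eta'(B)$ that can be produced by applying $\Effect(\alpha,\cdot)$ to an evaluation $\eta''$ with $\eta''\in\textit{Sat}(g)$ and $\eta''(B)=c$. Since the algorithm iterates this propagation to a fixed point, $d$ is closed under exactly this ``one transition'' operation. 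I would first make this closure property precise as the key invariant to be exploited.

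For the base case, let $\abracket{l,\eta}$ be an initial state of $\mathcal{M}(\magsym)$, so $l=l_0$ and $\eta\in\textit{Sat}(g_0)$; by the uniqueness assumption on the initial evaluation $\eta_0$, we have $\eta=\eta_0$, and \approxLocalDomain\ is set up so that $d(B,l_0)$ contains $\eta_0(B)$ as its seed value, giving $\eta(B)\in d(B,l_0)$. For the inductive step, suppose $\abracket{l',\eta'}$ is reachable via a run $\pi=s_0\ldots s_{n-1}s_n$ with $s_n=\abracket{l',\eta'}$ and $s_{n-1}=\abracket{l,\eta}$; by the induction hypothesis $\eta(B)\in d(B,l)$. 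Now distinguish the two forms the transition $s_{n-1}\longrightarrow s_n$ can take in \autoref{def:unwrapping}: either it is a ``real'' step coming from an edge $l\xhookrightarrow{g:\alpha}l'$ with $\eta\in\textit{Sat}(g)$ and $\eta'=\Effect(\alpha,\eta)$, or it is an added self-loop, in which case $l'=l$, $\eta'=\eta$, and the claim is immediate from the induction hypothesis. In the first case, set $c=\eta(B)$; then $\eta$ witnesses that $c\in d(B,l)$ feeds the edge $l\xhookrightarrow{g:\alpha}l'$, so by the closure property of $d$ the value $\eta'(B)=\Effect(\alpha,\eta)(B)$ is added to $d(B,l')$, i.e.\ $\eta'(B)\in d(B,l')$, as required.

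The one subtlety I expect to be the main obstacle is the bookkeeping around synchronized transitions in the combined graph: an edge of $\agsym_{\magsym}$ may carry a composed update $\alpha_j\updcomp\alpha_i$, and one must check that \approxLocalDomain\ propagates $d$ along these composed-action edges with the composed effect $\Effect(\alpha_i,\Effect(\alpha_j,\cdot))$ exactly as the unwrapping uses them — so that the ``one transition'' closure property genuinely matches the transition relation $\longrightarrow_0$. Since \autoref{def:combined} already collapses all synchronizations into ordinary $\tau$-free edges of $\agsym_{\magsym}$ before the unwrapping is taken, and the lemma is stated for $\mathcal{M}(\magsym)=\mathcal{M}(\agsym_{\magsym})$, this reduces to checking that \approxLocalDomain\ ranges over precisely the edge set $\hookrightarrow$ of $\agsym_{\magsym}$; once that is noted, the induction goes through with no further difficulty. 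A secondary point to state carefully is that $d_i$, the per-agent narrowing used to transform each $\agsym_i$, only enlarges the relevant value sets (it is a union over the projected-away location components), so the bound $\eta(B)\in d(B,l)$ transfers to $\eta(B\cap\Var_i)\in d_i(B\cap\Var_i,l_i)$, which is the form actually consumed by \autoref{alg:over-approx-abstraction}.
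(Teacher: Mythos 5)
Your argument is correct, and it is worth noting that the paper itself states this lemma without an explicit proof, relying implicitly on the description of \approxLocalDomain in the appendix; your induction on the length of a witnessing run is exactly the argument that is being taken for granted. The base case works even without invoking the uniqueness of $\eta_0$, since the algorithm seeds $l_0.d$ with $\set{\eta(B)\mid\eta\in\textit{Sat}(g_0)}$, and your case split between genuine edge transitions of $\agsym_{\magsym}$ and the self-loops added for seriality in \autoref{def:unwrapping} covers the transition relation completely. Your worry about synchronized updates is correctly dissolved: the lemma concerns $\mathcal{M}(\magsym)=\mathcal{M}(\agsym_{\magsym})$, and \approxLocalDomain\ runs on the combined graph, whose edges already carry the composed, synchronization-free actions, with \procEdgeLabels\ applying the same sequence of atomic updates that $\Effect$ applies in the unwrapping.

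The one item you should not leave as an informal remark is the closure invariant itself: that at termination, for every edge $l\xhookrightarrow{g:\alpha}l'$ and every $\eta\in\textit{Sat}(g)$ with $\eta(B)\in d(B,l)$, the value $\Effect(\alpha,\eta)(B)$ lies in $d(B,l')$. This is where the actual content of the worklist algorithm enters: whenever a location's domain grows it is coloured grey and all its successors are re-enqueued with it recorded as a relevant predecessor, self-loops are iterated to stabilization, and termination follows from finiteness of the domains; together these facts give the fixed-point property your induction step consumes. Spelling this out is routine but necessary, since the induction is vacuous without it. Your closing remark about the per-agent narrowing $d_i$ is not needed for the lemma (it only enlarges the sets and matters for \autoref{alg:over-approx-abstraction}), but it is accurate.
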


\begin{theorem}\label{lemma:over-approx}
\quad $\mathcal{M}(\magsym) \quad \overapprox_{\Var\setminus\Argsd(\mapf)}\quad \mathcal{M}(\abstr_{\mapf}^{\may}(\magsym))$.
\end{theorem}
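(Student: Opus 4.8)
The plan is to exhibit an explicit witness for the may‑approximation $\mathcal{M}(\magsym)\overapprox_{V}\mathcal{M}(\abstr_{\mapf}^{\may}(\magsym))$ and to check its defining clauses (in the style of \autoref{def:simulation-relation}: matching of initial states, agreement of labels on the shared propositions, and step‑matching); via \autoref{lemma:over-approx-sim} and \autoref{prop:simulation-implies} this is exactly what feeds the announced preservation of \ACTLs. Write $M_1=\mathcal{M}(\magsym)$, $M_2=\mathcal{M}(\abstr_{\mapf}^{\may}(\magsym))$, $B=\Argsd(\mapf)$, $V=\Var\setminus B$. The candidate is $\overapprox_V$ restricted to reachable states:
\[
  \mathcal{R}=\{(s_1,s_2)\in\textit{St}_1\times\textit{St}_2\ :\ s_1\overapprox_V s_2,\ s_1\text{ reachable in }M_1,\ s_2\text{ reachable in }M_2\}.
\]
The initial‑state clause follows from the uniqueness assumption on $g_0$: by construction the abstract initial condition is $g_0$ conjoined with equalities pinning the fresh variables $\Argsr(\mapf)$ to their defaults, so its unique model agrees with $\eta_0$ on $V$, and since the abstraction keeps the location structure both unwrappings start in a single state at $l_0$. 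The labelling clause is also routine: related states share their location and agree on $V$, and $AP(V)\subseteq\Loc\cup\Cond$ consists precisely of the location atoms together with the guards over $V$, so the labels restricted to $AP(V)$ coincide — this is essentially \autoref{lemma:over-approx-sim}.

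The heart of the argument is the step clause. Take $(s_1,s_2)\in\mathcal{R}$ with $s_1=\abracket{l,\eta_1}$, $s_2=\abracket{l,\eta_2}$, and a transition $s_1\longrightarrow_1\abracket{l',\eta_1'}$. If it is a genuine transition, it is induced by a combined‑graph edge $l\xhookrightarrow{g:\alpha}l'$ with $\eta_1\in\textit{Sat}(g)$ and $\eta_1'=\Effect(\alpha,\eta_1)$. Reachability of $s_1$ and \autoref{lemma:over-approx-domain} give $\eta_1(B)\in d(B,l)$; hence, after narrowing to each agent's components, the value tuple $\eta_1(B)$ is among those the abstraction of \autoref{alg:over-approx-abstraction} (resp.\ its generalisation) iterates over when rewriting the agent edges that compose $l\xhookrightarrow{g:\alpha}l'$, and recombining these rewritten edges as in \autoref{def:combined} yields an edge $l\xhookrightarrow{g':\alpha'}l'$ of $\abstr_{\mapf}^{\may}(\magsym)$ with $g'=g[B=\eta_1(B)]$. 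Then $\eta_1\models g'$ (since $\eta_1$ already assigns $B$ exactly the substituted values), and as $g'$ mentions only variables of $V$ on which $\eta_2$ agrees with $\eta_1$, also $\eta_2\models g'$. It remains to see that $\Effect(\alpha',\eta_2)$ and $\eta_1'$ coincide on $V$: one argues by induction along the update sequence $\alpha^{(1)}\cdots\alpha^{(m)}$ that dropping every update whose target lies in $B$, and replacing in each surviving update every occurrence of a $B$‑variable by the value it holds at that point of the concrete computation (precisely what the sets $\delta_i$ record), makes the abstract action compute on $\eta_2$ exactly what $\alpha$ computes on $\eta_1$ for every retained variable. So $s_2\longrightarrow_2\abracket{l',\eta_2'}$ with $\eta_2'(V)=\eta_1'(V)$, and since both targets are reachable, $(\abracket{l',\eta_1'},\abracket{l',\eta_2'})\in\mathcal{R}$.

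The remaining case is the self‑loop that the unwrapping inserts for seriality, i.e.\ when $\abracket{l,\eta_1}$ has no enabled concrete edge; this is treated using the fact that every abstract edge leaving $l$ descends from a concrete edge leaving $l$, which reduces the obligation to a short case distinction (immediate when $l$ has no outgoing edge at all). I expect the update‑effect step above to be the main obstacle: making precise and proving the inductive claim that freezing the removed variables according to the reachable‑value sets $\delta_i$ lets the abstract action mirror the concrete one on $V$ — this needs careful bookkeeping of which intermediate evaluation each update reads, and of the interleaving of surviving updates with the dropped ones. A secondary technicality is reconciling the domain $d$, computed once on the combined graph, with the per‑agent narrowings $d_i$ that \autoref{alg:abstraction-idea} actually passes to each $\agsym_i$, so that the tuple $\eta_1(B)$ is genuinely available in every agent's iteration.
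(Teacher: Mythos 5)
Your proposal takes essentially the same route as the paper's own proof: both establish the claim by checking the clauses of \autoref{def:simulation-relation} on the unwrapped models, using the unique initial evaluation for clause (i) and the fact that every abstract edge arises from a concrete one by substituting values of the removed variables drawn from the upper-approximated local domain (\autoref{lemma:over-approx-domain}) for clause (ii). Your write-up is in fact more detailed than the paper's, which disposes of clause (ii) by asserting it is ``naturally satisfied'' by the construction of $\abstr^{\may}$; the extra points you flag (the update-sequence bookkeeping via the $\delta_i$, reconciling $d$ with the per-agent narrowings $d_i$, and the seriality self-loops) are genuine technicalities that the paper's proof also leaves implicit.
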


\begin{proof}

Let $M_1=\mathcal{M}(\magsym)$ and $M_2=\mathcal{M}(\abstr_{\mapf}^{\may}(\magsym))$, $W=\Argsd(\mapf)$, $V=\Var_1\setminus W$, $\Var_2 = V\cup \Argsr(\mapf)$.
Notice that $\Argsr(\mapf)\cap V = \varnothing$ by definition of $\mapf$, thus applying the mapping functions from $\mapf$ on the evaluation $\eta$ will have no effect on its $V$ fragment, namely for any $X\subseteq\Var_1\cup\Var_2$ and $c\in\Dom(X)$ if $X\cap V = \varnothing$ then $\eta(V)=\eta[X=c](V)$.

The requirement (i) from \autoref{def:simulation-relation} is satisfied as the set of locations $\Loc$ remains unchanged and for any $\eta_1\in \textit{Sat}(g_0)$ and $\eta_2\in \textit{Sat}(g_0\wedge g_R)$, where $g_R\in \textit{Cond}(\Argsr)$, it holds that $\eta_1(V)=\eta_2(V)$.

By the construction of the $\abstr^{\may}$ each state $\abracket{l,\eta}\in \textit{St}_2$ is either initial $\abracket{l,\eta} \in I_2$ or $\exists \abracket{l',\eta'}\in \textit{St}_2$ such that $\eta = \Effect(\alpha, \eta'[W=c])$ for some $c\in d(W,l')$, $l'\xhookrightarrow{g:\alpha}l$, and $\eta'[W=c]\in \textit{Sat}(g)$.
By that, the requirement (ii) is naturally satisfied.
\end{proof}

Our main result follows directly from Theorems~\ref{prop:simulation-implies}, \ref{lemma:over-approx-sim}, and~\ref{lemma:over-approx}.
\begin{theorem}\label{prop:overapprox-soundness}
Let $\magsym$ be a MAS graph, and $\mapf$ a set of mappings as defined in Section~\ref{sec:restricted-scope-abstraction}. %
Then, for every formula $\psi$ of \ACTLs that includes no variables being removed or added by $\mapf$: \smallskip\\
\centerline{$\unwrap(\abstr_{\mapf}^{\may}(\magsym)) \models \psi$\quad implies \quad$\unwrap(\magsym) \models \psi.$}
\end{theorem}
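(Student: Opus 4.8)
The plan is to chain together the three results that precede the statement, exactly as the paper advertises. First I would fix a MAS graph $\magsym$ and a set of mappings $\mapf$ satisfying the requirements of Section~\ref{sec:restricted-scope-abstraction}, and let $M_1 = \unwrap(\magsym)$ and $M_2 = \unwrap(\abstr_{\mapf}^{\may}(\magsym))$. Let $W = \Argsd(\mapf)$ be the removed variables and $V = \Var \setminus W$; note that $\psi$ by hypothesis is an $\actls$ state formula over $AP(V)$, i.e.\ it uses no proposition mentioning a variable in $W$ or in $\Argsr(\mapf)$. The goal $M_2 \models \psi \Rightarrow M_1 \models \psi$ unfolds, by the definition of $M \models \psi$, to: assuming $M_2, s_2 \models \psi$ for every $s_2 \in I_2$, show $M_1, s_1 \models \psi$ for every $s_1 \in I_1$.

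The key steps, in order. (1) By Theorem~\ref{lemma:over-approx} we have $M_1 \overapprox_{V} M_2$ at the level of models; I would read off from \autoref{def:matching} (lifted to models in the obvious pointed sense used in \autoref{lemma:over-approx-sim}) that for every reachable $s_1 = \abracket{l,\eta_1} \in I_1$ there is a state $s_2 = \abracket{l,\eta_2} \in I_2$ with $\eta_1(v) = \eta_2(v)$ for all $v \in V$, i.e.\ $s_1 \overapprox_V s_2$. This is where the "requirement (i)" argument inside the proof of Theorem~\ref{lemma:over-approx} is used: the location sets coincide and the initial conditions agree on the $V$-fragment. (2) By \autoref{lemma:over-approx-sim}, $s_1 \overapprox_V s_2$ gives $(M_1,s_1) \simul_{AP(V)} (M_2,s_2)$. (3) By Theorem~\ref{prop:simulation-implies}, since $\psi$ is an $\actls$ state formula built only from propositions in $AP(V)$, $M_2, s_2 \models \psi$ implies $M_1, s_1 \models \psi$. (4) Combining: pick an arbitrary $s_1 \in I_1$, obtain the matching $s_2 \in I_2$ from step (1), use $M_2 \models \psi$ to get $M_2, s_2 \models \psi$, and conclude $M_1, s_1 \models \psi$ by steps (2)–(3). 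Since $s_1$ was arbitrary, $M_1 \models \psi$.

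The only genuine subtlety — and the step I would be most careful about — is the transition from the model-level approximation $\overapprox_V$ (Theorem~\ref{lemma:over-approx}) to the \emph{pointed} hypothesis required by \autoref{lemma:over-approx-sim}, which is stated for pointed models at specific reachable states sharing the same location. One must check that the initial states line up: every $s_1 \in I_1$ has the form $\abracket{l_0, \eta_1}$ with $\eta_1 \in \textit{Sat}(g_0)$, and by the construction of $\abstr^{\may}$ the abstract initial condition is $g_0 \wedge g_R$ with $g_R$ constraining only the fresh variables in $\Argsr(\mapf)$, so there is a corresponding $s_2 = \abracket{l_0,\eta_2} \in I_2$ agreeing with $s_1$ on $V$; under the paper's standing assumption of a unique default initial evaluation this is immediate. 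Everything else is bookkeeping: matching atomic propositions are exactly $AP(V)$ because $\psi$ mentions no removed or added variable, so clause (ii-a) of \autoref{def:simulation-relation} is satisfied, and clause (ii-b) is already packaged inside Theorem~\ref{lemma:over-approx}. Hence the theorem follows directly, as claimed.
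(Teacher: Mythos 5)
Your proposal is correct and follows essentially the same route as the paper, which proves this theorem in one line by chaining Theorem~\ref{lemma:over-approx}, Lemma~\ref{lemma:over-approx-sim}, and Theorem~\ref{prop:simulation-implies}; you simply spell out the unfolding of $M \models \psi$ over initial states and the lifting from the model-level relation $\overapprox_V$ to the pointed hypothesis of Lemma~\ref{lemma:over-approx-sim}, details the paper leaves implicit.
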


\para{Under-approximating abstractions.}
An analogous result can be obtained for must-abstraction $\abstr_{\mapf}^{\must}(\magsym)$.
\begin{lemma}\label{lemma:under-approx-domain}
	Let $\magsym$ be a \magname and $d$ be a lower-approximation of a local domain for $B\subsetneq \Var$.
	Then, for any $c\in d(B,l)$ there must exist reachable $\abracket{l,\eta}$ such that $\eta(B)=c$.
\end{lemma}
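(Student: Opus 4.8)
\medskip\noindent\textit{Proof plan.}
This lemma is the exact mirror image of \autoref{lemma:over-approx-domain}: there the \emph{upper}-approximation was shown to contain every reachable $B$-value, whereas here the \emph{lower}-approximation must be shown to contain \emph{only} reachable $B$-values. The plan is to reuse the structure of the earlier argument but run the induction in the opposite direction --- rather than tracing a concrete run forward into $d$, for every value $c$ surviving in $d(B,l)$ I would build a run of $\mathcal{M}(\magsym)$ ending in a state $\abracket{l,\eta}$ with $\eta(B)=c$. Concretely, I would show that the converged output of \approxLocalDomain in its lower-approximation mode (the instantiation with $d_0=\Dom(B)$, $\auxop=\cap$) satisfies the invariant: \emph{for every location $l$, each $c\in d(B,l)$ is witnessed by a reachable $\abracket{l,\eta}$ of $\mathcal{M}(\magsym)$ with $\eta(B)=c$}. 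The induction would be on the length of the support chain by which \approxLocalDomain retained $c$, following the order in which the locations are settled (tracked by the reachability index $r$ the algorithm produces).

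\textit{Base case:} at $l_0$ the only value never removed is $\eta_0(B)$, where $\eta_0$ is the unique initial evaluation fixed in the \emph{Assumptions and Notation} paragraph; it is witnessed by $\abracket{l_0,\eta_0}\in I$. \textit{Inductive step:} if $c\in d(B,l)$ was retained, it was ``supported'' through some incoming edge $l'\xhookrightarrow{g:\alpha}l$ and a value $c'\in d(B,l')$ --- i.e.\ some evaluation with $B$-part $c'$ satisfies $g$, and $\Effect(\alpha,\cdot)$ sends it to an evaluation with $B$-part $c$ (that $\Effect$ acts on the $B$-coordinates in a way determined by the $B$-part of its argument is made explicit by the substitutions $\alpha^{(i)}[V=\eta(V)]$ of \autoref{alg:over-approx-abstraction}). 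By the induction hypothesis there is a reachable $\abracket{l',\eta'}$ with $\eta'(B)=c'$; one then wants to fire $l'\xhookrightarrow{g:\alpha}l$ from $\eta'$ to reach $\abracket{l,\Effect(\alpha,\eta')}$, which has $B$-part $c$.

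\textit{Main obstacle:} the gap in that step is that the \emph{particular} witness $\eta'$ need not satisfy $g$ --- \approxLocalDomain only knows that \emph{some} evaluation with $B$-part $c'$ does, and that evaluation may differ from $\eta'$ on variables outside $B$. This asymmetry is exactly what makes under-approximating reachability the harder direction, and closing it is the crux. The resolution has to come from whatever conservativeness the appendix builds into the lower-approximation mode: at a minimum the algorithm must ground every retained value in a \emph{finite} support chain reaching $\abracket{l_0,\eta_0}$ (so that cyclically-supported but unreachable values are not kept) along which every guard is certified satisfiable using only the information the algorithm actually tracks --- and the proof then reduces to checking that this is precisely enough to run the construction above; alternatively one strengthens the induction hypothesis to carry enough of the non-$B$ part of each witness $\eta'$ to keep the relevant guards satisfied. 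Everything else --- the serial closure of $\longrightarrow$ in \autoref{def:unwrapping}, the shared-variable convention, and the $\updcomp$-composition of synchronized actions in \autoref{def:combined} --- is routine bookkeeping. Once this lemma holds, the under-approximating analogues of \autoref{lemma:over-approx} and \autoref{prop:overapprox-soundness} follow along the same chain, now with the simulation $\mathcal{M}(\abstr^{\must}_{\mapf}(\magsym))\overapprox_{\Var\setminus\Argsd(\mapf)}\mathcal{M}(\magsym)$ oriented the other way, so that falsity of \ACTLs formulae transfers from the abstract model down to the concrete one.
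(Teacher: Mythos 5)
Your plan correctly mirrors \autoref{lemma:over-approx-domain} and, to your credit, you put your finger on exactly the right spot --- but you then leave that spot open, so what you have is an outline, not a proof. The inductive step is the entire content of the lemma, and your own text concedes it does not go through as stated: the support that keeps $c$ in $d(B,l)$ is an evaluation $\eta\in\textit{Sat}(g)$ with $\eta(B)=c'$ whose values \emph{outside} $B$ are chosen existentially by \procEdgeLabels (the sets $\delta_i$ range over full evaluations, but only their $B$-projection is stored in $l.d$ between iterations of \approxLocalDomain), whereas the inductive hypothesis only hands you \emph{some} reachable $\abracket{l',\eta'}$ with $\eta'(B)=c'$. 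That witness may fail the guard $g$, and --- a second failure mode you gloss over --- even if it satisfies $g$, $\Effect(\alpha,\eta')(B)$ need not equal $c$ when the update of a $B$-variable mentions variables outside $B$; your parenthetical appeal to the substitutions $\alpha^{(i)}[V=\eta(V)]$ in \autoref{alg:over-approx-abstraction} concerns the abstraction construction, not the behaviour of $\Effect$ in the concrete graph. Deferring the fix to ``whatever conservativeness the appendix builds into the lower-approximation mode'' does not close this: the appendix specifies no such mechanism, so the burden is on the proof to either strengthen the invariant so that it carries full reachable evaluations (not just $B$-values) along the support chain down to $\abracket{l_0,\eta_0}$, or to isolate and use a syntactic condition under which guards and the $B$-relevant updates of the propagating edges depend on $B$ alone. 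You name both options but execute neither, so the crux of the lemma remains unproven.

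For calibration: the paper itself states this lemma without an explicit proof and only \emph{uses} it in Appendix~\ref{sec:must-abstr-proof} to build, value by value, a concrete run matching an abstract one, so there is no detailed argument to compare yours against. That, however, does not rescue your attempt --- a complete proof must pin down what ``lower-approximation'' guarantees (i.e., prove the invariant about the fixed point of \approxLocalDomain with $d_0=\Dom(V)$, $\auxop=\cap$) and resolve the guard/effect dependence on non-$B$ variables; as submitted, your argument stops exactly where that work begins.
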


\begin{theorem}\label{prop:underapprox-soundness}
\extended{Let $\magsym$ be a MAS graph, and $\mapf$ a set of mappings as defined in Section~\ref{sec:restricted-scope-abstraction}.
Then, f}\short{F}or each formula $\psi\in\ACTLs$ including no variables removed by $\mapf$:\quad
$\unwrap(\abstr_{\mapf}^{\must}(\magsym)) \not\models \psi$\ implies\ $\unwrap(\magsym) \not\models \psi.$
\end{theorem}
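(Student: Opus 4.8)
The plan is to mirror the structure used for the \may-case, dualizing each ingredient. First I would set up the analogue of \autoref{def:matching} and \autoref{lemma:over-approx-sim}: define $\underapprox_V$ so that $(M_1,s_1)\underapprox_V(M_2,s_2)$ means exactly $(M_2,s_2)\overapprox_V(M_1,s_1)$, i.e. the concrete model simulates the abstract one (roles swapped). By \autoref{lemma:over-approx-sim} this gives $(M_2,s_2)\simul_{AP(V)}(M_1,s_1)$, and then by \autoref{prop:simulation-implies}, for any \ACTLs state formula $\psi$ over $AP(V)$ we get that $M_1,s_1\srel\psi$ implies $M_2,s_2\srel\psi$; contrapositively, $M_2,s_2\not\srel\psi$ implies $M_1,s_1\not\srel\psi$, which lifted to initial states is precisely the statement with $M_2=\unwrap(\abstr^{\must}_{\mapf}(\magsym))$ and $M_1=\unwrap(\magsym)$.

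The substantive step is the dual of \autoref{lemma:over-approx}: namely $\unwrap(\abstr^{\must}_{\mapf}(\magsym))\;\overapprox_{\Var\setminus\Argsd(\mapf)}\;\unwrap(\magsym)$, i.e. every run of the must-abstraction is matched (on the retained variables) by a run of the concrete model. Here is where \autoref{lemma:under-approx-domain} does the work that \autoref{lemma:over-approx-domain} did in the \may-case: when the lower-approximation $d$ substitutes a value $c\in d(W,l')$ for the removed variables $W=\Argsd(\mapf)$ along an edge $l'\xhookrightarrow{g:\alpha}l$, we need to know that some concrete reachable state $\abracket{l',\eta'}$ actually has $\eta'(W)=c$ and satisfies $g$ — otherwise the abstract transition would have no concrete witness. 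I would therefore argue by induction on run length: given an abstract run prefix matched by a concrete run ending at $\abracket{l',\eta'}$ with $\eta'$ agreeing with the abstract evaluation on $V$, and given the next abstract edge built from substitution value $\vec c$, use \autoref{lemma:under-approx-domain} (applied along the concrete run, or more carefully, relying on the construction of $\abstr^{\must}$ only emitting an edge with guard $g[W=\vec c]\neq\bot$ when $\vec c$ is genuinely reachable at $l'$) to extend the concrete run one step, checking that the effect of $\alpha$ keeps the $V$-components in sync because $\Argsr(\mapf)\cap V=\varnothing$ and assignments to removed variables were turned into $\tau$.

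The main obstacle I anticipate is exactly this witnessing argument: \autoref{lemma:under-approx-domain} only guarantees that each value in $d(B,l)$ is reachable \emph{by some} state, not that it is reachable \emph{consistently with the particular concrete run prefix} we have built so far. For a faithful step-by-step simulation one wants the stronger property that the lower-approximation is closed under the concrete transition relation in a way that lets runs be extended locally. I expect the resolution to be that the restricted-scope must-abstraction actually fixes the removed variables to their default values $v_0$ inside the scope (as stated in Section~\ref{sec:restricted-scope-abstraction}), so the only substitution values that matter are the defaults, for which reachability at the relevant location is immediate from the initial evaluation and the construction; outside the scope the removed variables are untouched, so the concrete run simply copies the abstract one. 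Thus the induction goes through by splitting on whether the current location lies in $\Scope_i$, and the bookkeeping — not a deep difficulty — is to verify that $g[W=\vec c]\neq\bot$ in $\abstr^{\must}$ forces the concrete guard $g$ to be satisfiable by the matching concrete evaluation. Once \autoref{lemma:over-approx} (dual form) is in hand, \autoref{prop:underapprox-soundness} follows by chaining it with \autoref{lemma:over-approx-sim} and \autoref{prop:simulation-implies} exactly as \autoref{prop:overapprox-soundness} did.
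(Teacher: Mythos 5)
Your overall route is the paper's: show that the concrete model simulates the \must-abstraction over $AP(V)$ (Definition~\ref{def:simulation-relation} with the roles swapped relative to the \may-case), with Lemma~\ref{lemma:under-approx-domain} playing the part that Lemma~\ref{lemma:over-approx-domain} played before, and then conclude by the contrapositive of Theorem~\ref{prop:simulation-implies}. The proof in Appendix~\ref{sec:must-abstr-proof} does exactly this: it notes that every non-initial abstract state arises via an edge instantiated with some $c\in d(W,l')$ satisfying the substituted guard, and then invokes Lemma~\ref{lemma:under-approx-domain} to produce concrete counterpart states. So the skeleton of your argument is the intended one, and you have also put your finger on the genuinely delicate point: the lemma only guarantees that each $c\in d(W,l')$ is realized by \emph{some} reachable concrete state, not by one compatible with the concrete run prefix (or the related concrete state) built so far.

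Where your proposal goes wrong is in how you close that gap. In the restricted-scope construction (Algorithm~\ref{alg:general-abstraction}), the guards of edges entering, within, or leaving the scope are instantiated with \emph{every} value $\vec{c}\in d(W,l)$ of the lower-approximated domain; the defaults $v_0$ are only what the removed variables are \emph{reset to} inside the scope, not the values substituted into guards. If only defaults were ever substituted, the lower-approximation $d$ and Lemma~\ref{lemma:under-approx-domain} would be pointless in the \must-case, so your case split ``inside the scope only the defaults matter, hence reachability is immediate'' rests on a false description of the construction and the induction step does not go through as stated. What the paper relies on instead is that, for the particular value $c$ used on the incoming abstract edge, Lemma~\ref{lemma:under-approx-domain} supplies some reachable concrete state at $l'$ with $W$-value $c$, from which the matching concrete run is constructed afresh for each abstract state rather than by locally extending your fixed prefix; the compatibility concern you raise (agreement on the retained variables $V$ with the abstract state being simulated) is dealt with only by the terse ``construct a sequence of corresponding states'' step, not by the default-value argument you propose. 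In short: same approach as the paper, correct identification of the crux, but your proposed resolution of it is based on a misreading of the \must-abstraction and would fail; you would need to either strengthen the witnessing argument along the lines of the paper's appendix or impose/verify the run-extension property you yourself flag as missing.
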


The proof is analogous (see appendix ~\ref{sec:must-abstr-proof}).

\para{Abstraction for MAS templates.}
Interestingly, when the MAS graph contains a subset of agent graphs being instances of the very same template (i.e., almost identical, up to variable renaming and evaluation of a constant parameters), one could apply an abstraction on that template fragment resulting with a coarser abstraction of the original MAS graph, yet exponentially faster to compute.

\begin{definition}
	MAS template $MT$, a compact representation of a \magname\ \magsym, as a tuple $(\Var_{sh},\textit{Const}_{sh},(GT_1, \#_1),\ldots, (GT_k, \#_k))$, which lists of the pairs of agent templates $GT_i$ and number of their instances $\#_i$, as well as the set of shared variables $\Var_{sh}$ and additionally the set of shared constant (read-only) variables $\textit{Const}_{sh}$.
\end{definition}

\begin{theorem}
Let $\magsym = \multiset{\Var_{sh},\agsym_1,\dots,\agsym_n}$ and $MT$ be a corresponding MAS template.
Then upper-approximating a local domain $d_i$ of an agent template $GT_i$ discarding all of the synchronisation labels from its edges, will induce a may-abstraction $\abstr^{\may}(MT)$.

Analogously, lower-approximating a local domain $d_i$ of an agent template $GT_i$ discarding all edges with synchronisation label, will induce a must-abstraction $\abstr^{\must}(MT)$.
\end{theorem}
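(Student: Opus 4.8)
The plan is to reduce both claims to the single‑graph results already proved, namely \autoref{lemma:over-approx} and \autoref{prop:underapprox-soundness}, by showing that the per‑template analysis is nothing but a \emph{coarser} run of the same construction, and then replaying the associated simulation argument on the instantiation of the abstract template. Throughout, $\unwrap(\abstr^{\may}(MT))$ (resp.\ $\unwrap(\abstr^{\must}(MT))$) is read as the unwrapping of the MAS graph obtained by instantiating the abstract template. The real content is a template‑level soundness lemma for the domain approximation $d_i$; everything else is bookkeeping copied from the earlier proofs.

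For the may‑direction, let $\widehat{GT_i}$ be the standalone agent graph obtained from the template $GT_i$ by replacing every labelled edge $l\xhookrightarrow{g:c!\alpha}l'$ and $l\xhookrightarrow{g:c?\alpha}l'$ with the internal edge $l\xhookrightarrow{g:\alpha}l'$, and by treating every shared variable as re‑assignable to an arbitrary value of its domain at each step (an environment over‑abstraction). The key lemma to establish is a template‑level counterpart of \autoref{lemma:over-approx-domain}: whenever $\abracket{L,\mu}$ is reachable in $\unwrap(\magsym)$ and $\agsym_j$ is an instance of $GT_i$ with local component $L_j$, the (renamed) restriction of $\mu$ to the variables of $\agsym_j$ abstracted by $\mapf$ belongs to $d_i(\cdot,L_j)$, where $d_i$ is the upper‑approximation returned by \approxLocalDomain{} on $\widehat{GT_i}$. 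I would prove it by taking a witnessing run $\pi$ of $\unwrap(\magsym)$ and projecting it onto $\agsym_j$: each step of $\pi$ either leaves $\agsym_j$ unchanged, or fires one of $\agsym_j$'s internal edges, or fires one of $\agsym_j$'s synchronising edges paired with a partner, in which case the applied effect is $\alpha_j\updcomp\alpha_i$ as in \autoref{def:combined}. In every case the resulting values of $\agsym_j$'s \emph{own} variables are reproduced by a single internal step of $\widehat{GT_i}$, choosing at that step the (now unconstrained) shared variables to equal the values they carry along $\pi$. Hence the projection embeds into a run of $\unwrap(\widehat{GT_i})$, and soundness of the upper‑approximation variant of \approxLocalDomain{} (the argument of \autoref{lemma:over-approx-domain}, applied to the single graph $\widehat{GT_i}$) yields the claim.

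With the key lemma in hand I would follow the proof of \autoref{lemma:over-approx} almost verbatim: take $\mathcal{R}=\{(\abracket{l,\eta},\abracket{l,\eta'}) : \eta,\eta' \text{ agree on } \Var\setminus\Argsd(\mapf)\}$ and check clauses (i)--(ii) of \autoref{def:simulation-relation} for $\unwrap(\magsym)\ \overapprox_{\Var\setminus\Argsd(\mapf)}\ \unwrap(\abstr^{\may}(MT))$. Initial states and the $AP(\Var\setminus\Argsd(\mapf))$‑labels are treated exactly as there; for the transition clause, a step of $\unwrap(\magsym)$ arises from an edge of the combined graph $\agsym_{\magsym}$ (possibly the synchronisation of a $c!$‑ and a $c?$‑edge), and the key lemma guarantees that the value $\eta$ assigns to the abstracted variables of the relevant template instance occurs in the corresponding $d_i$, so \autoref{alg:over-approx-abstraction} applied template‑wise has produced a matching abstract edge (or pair of edges) supplying the needed successor. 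Then \autoref{lemma:over-approx-sim} and \autoref{prop:simulation-implies} turn this simulation into preservation of \ACTLs, i.e.\ $\abstr^{\may}(MT)$ is a genuine may‑abstraction in the sense of \autoref{prop:overapprox-soundness}. The must‑direction is symmetric: one now deletes every synchronising edge of $GT_i$ outright (both in the graph fed to \approxLocalDomain{} and in the abstract template), obtaining $\widetilde{GT_i}$ with internal edges only; since an internal edge is available to $\agsym_j$ under the interleaving rule of \autoref{def:combined}, every run of $\unwrap(\widetilde{GT_i})$ replays as a run of $\agsym_j$ inside $\unwrap(\magsym)$, so every value retained by the lower‑approximation on $\widetilde{GT_i}$ is genuinely reachable (the analogue of \autoref{lemma:under-approx-domain}), and the remainder follows the proof of \autoref{prop:underapprox-soundness}, giving $\unwrap(\abstr^{\must}(MT))\ \overapprox_{\Var\setminus\Argsd(\mapf)}\ \unwrap(\magsym)$.

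I expect the main obstacle to live inside the key lemma, in the interaction of synchronisation with shared variables. For the may‑case one must verify that collapsing a synchronised transition with composed effect $\alpha_j\updcomp\alpha_i$ into a single internal step of $\widehat{GT_i}$ never produces a value of $\agsym_j$'s local variables outside $d_i$; this is exactly why the shared variables must be treated as non‑deterministically writable in $\widehat{GT_i}$, since the partner's action may write shared variables that $\agsym_j$'s action subsequently reads. For the must‑case the subtlety runs the other way: guards on $\agsym_j$'s internal edges may depend on shared variables written by \emph{other} agents, so the lower‑approximation on $\widetilde{GT_i}$ must be computed without assuming any such external writes, in order to keep every retained value reachable for \emph{every} instance of $GT_i$ in $\magsym$; making this airtight (and, if necessary, restricting $\mapf$ so that the abstracted variables and the guards controlling them do not read externally‑written shared variables) is the step I would spend most effort on. The exponential‑speed‑up claim is then immediate, as \approxLocalDomain{} is run on each $GT_i$ in isolation rather than on the product graph $\agsym_{\magsym}$.
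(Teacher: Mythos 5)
Your proposal is correct and matches the paper's own (one-sentence) argument: the paper simply observes that discarding synchronisation labels makes the template-level upper-approximation of $d_i$ coarser (a superset, hence still sound), while discarding the synchronised edges makes the lower-approximation coarser (a subset of genuinely reachable values, realisable by interleaving), after which Theorems~\ref{lemma:over-approx} and~\ref{prop:underapprox-soundness} and the simulation machinery apply — which is precisely your key lemma plus the replayed simulation proofs. The only divergence is presentational: you prove template-level domain soundness directly against reachability in $\unwrap(\magsym)$ rather than by comparing with the combined-graph domain, and the shared-variable ``havoc'' you add (as well as the tentative restriction on $\mapf$ for the must case) is already implicit in \procEdgeLabels, which constrains the non-$V$ variables only by the edge guard, so no extra hypothesis beyond the paper's is needed.
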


\begin{proof}
	Follows directly from the fact that discarding synchronisation labels results with a coarser upper-approximation of a $d_i$ and discarding the edges with synchronisation labels with a coarser lower-approximation.
\end{proof}

\section{Case Study and Experimental Results}\label{sec:casestudy}

In this section, we evaluate our abstraction scheme on a simplified real-life scenario.

\subsection{Case Study: Integrity of Postal Voting}

As input, we use a scalable family of MAS graphs that specify a simplified postal voting system.
The system consists of a single agent graph for the Election Authority (depicted in \autoref{fig:upp-simple-a}) and $\NVot$ instances of eligible Voters (\autoref{fig:upp-simple-v}). 

Each voter can vote for one of the $\NCand$ candidates.
The voter starts at the location \texttt{idle}, and declares if she wants to receive the election package (that contains the voting declaration and a ballot) by post or to pick it up in person.
Then, the voter waits until it can be collected, which leads to location \texttt{has}.
At this point, the voter sends the forms back to the authority, either filled in or blank (e.g., by mistake).
The authority collects the voters' intentions (at location \texttt{coll\_dec}), distributes the packages (at \texttt{send\_ep}), collects the votes, and computes the tally (at \texttt{coll\_vts}).
A vote is added to the tally only if the declaration is signed and the ballot is filled.%

In the experiments for over-approximating abstraction, we have verified the formula\\
\centerline{$\varphi_\mathit{bstuff} \equiv \Apath\Always \prop{(\sum_{i=1}^{\NCand}tally[i] \leq \sum_{j=1}^{\NVot}pack\_sent[j] \leq \NVot)}$}\\
expressing the important property of resistance to \emph{ballot stuffing}.
More precisely, the formula says that the amount of sent packages can never be higher than the number of voters, and there will be no more tallied votes than packages.
The formula is satisfied in all the instances of our voting model.

In the experiments for under-approximation, we used\\
\centerline{$\varphi_\mathit{dispatch} \equiv \Apath\Always \prop{{\ coll\_vts}\textit{ imply }(\sum_{j=1}^{\NVot}pack\_sent[j] = \NVot)}$}\\
expressing that the election packages must be eventually dispatched to all the voters.\footnote{%
The aforementioned formula is equivalent to $\Apath\Sometm \prop{(\sum_{j=1}^{\NVot}pack\_sent[j] = \NVot)}$ in our model; the former variant is used due to \Uppaal's non-standard interpretation of the $\Apath\Sometm$.
}
The formula is false in all the considered instances of the system.

\subsection{Results of Experiments}

We have used the following abstractions:
\begin{itemize}
\item Abstraction 1: globally removes variables \texttt{mem\_sg} and \texttt{mem\_vt}, i.e., the voters' memory of the cast vote and whether the voting declaration has been signed;
\item Abstraction 2: removes the variable \texttt{mem\_dec} at locations \{\texttt{has}, \texttt{voted}\}, and \texttt{dec\_recv} at \{\texttt{coll\_vts}\};
\item Abstraction 3: combines the previous two.
\end{itemize}

Model checking of \extended{the \ACTLs formulae }$\varphi_\mathit{bstuff}$ and $\varphi_\mathit{dispatch}$ has been performed with 32-bit version of \Uppaal 4.1.24 on a laptop with Intel i7-8665U 2.11 GHz CPU, running Ubuntu 22.04.
The abstract models were generated using a script in node.js with default 2GB RAM memory limit.\footnote{
    Implementation prototype and utilized models can be found at \url{https://github.com/aamas23submission/postal-voting-model}.
}
\extended{The results were calculated by means of \texttt{verifyta} command line utility (v4.1.24), which is newer yet backwards compatible version which gives a more detailed summary.
Time measurements were taken using the external tool (node.js), %
therefore for small time estimates there could be deviations due to the `noises' caused by other factors.}
The results are presented in Tables~\ref{tab:benchmarks} and~\ref{tab:benchmarks2}.
Each row lists the scalability factors (i.e., the number of voters and candidates), the size and verification time for the original model (so called ``concrete model''), and the results for Abstractions 1--3.
``Memout'' indicates that the verification process ran out of memory. 
The columns `ta' and `tv' stand for abstract model generation and verification time accordingly.
In all the other cases, the verification of the abstract model was conclusive (i.e., the output was ``true'' for all the instances in Table~\ref{tab:benchmarks} and ``false'' for all the instances in Table~\ref{tab:benchmarks2}).

The results show significant gains. In particular, for the variant with $\NCand=3$ candidates, our over-approximating may-abstractions allowed to increase the main scalability factor by 3, i.e., to verify up to 9 instead of 6 voters.

In case when a $\Apath\Always$ formula is not satisfied by the model, its verification is in fact equivalent to finding a witness for a $\Epath\Sometm$ formula, which is often easy in practice.
And it becomes even more so if model checker utilizes on-the-fly techniques (as is the case of \Uppaal) and examines the model simultaneously with generation of states.
This is confirmed by experimental results for model checking of $\varphi_\mathit{dispatch}$.
However, it is worth noting that despite the lack of notable gains in terms of verification time from must-abstraction, the reduction in state space could be of immense importance when model is generated prior to its exploration.

\section{Conclusions}\label{sec:conclusions}

In this paper, we present a correct-by-design method for model reductions that facilitates model checking.
Theoretically speaking, our reductions are \emph{may/must} state abstractions.
Their main appeal, however, lies in their ease of use, rather than theoretical characteristics.
They come with a natural methodology, and require almost no technical knowledge from the user.
All that the user needs to do is to select a subset of variables to be removed from the MAS graph representing the system (in the simplest case).
The more involved approach relies on defining mappings that merge information stored in local variables of an agent module.

The methodology is agent-based, meaning that the abstractions transform the specification of the system at the level of agent graphs, without changing the overall structure.
Moreover, they are guaranteed to generate a correct abstract MAS graph, i.e., one that provides the lower (resp.~upper) bound for the truth values of formulae to be verified.
Note also that the procedure is generic enough to be used in combination with other techniques, such as partial-order reduction\extended{~\cite{Peled93representatives,Gerth99por,Jamroga20POR-JAIR}}.

We demonstrate the effectiveness of the method on a case study involving the verification of a postal voting procedure using the Uppaal model checker.
As we have shown, simple abstractions allow to expand the range of verification by many orders of magnitude.
Clearly, the efficiency of the method depends on the right selection of variables and the abstraction scope; ideally, this should be provided by a domain expert.

In the future, we want to combine variable abstraction with abstractions that transform locations in a MAS graph.
Even more importantly, we plan to extend the methodology from branching-time properties to formal verification of strategic ability.
An implementation as an extension of the STV model checker~\cite{Kurpiewski21stv-demo} is also considered.

\bibliographystyle{ACM-Reference-Format}
\bibliography{%
	bib/wojtek,%
	bib/wojtek-own,%
	bib/aamas23%
}

\clearpage
\appendix

\begin{algorithm}[!htp]
  \SetAlgoLined\DontPrintSemicolon
  \SetKw{KwGoTo}{go to}
  \SetKwFor{While}{while}{}{end while}%
  \nonl\myproc{\approxLocalDomain{$\agsym=\agsym_{\magsym}$, $V$}}{
    \ForEach{ $l \in \Loc$}{
      \label{alg:approx-ld-init1}
      $l.d := d_0$\; %
      $l.p:=\varnothing$\; %
      $l.\mathit{color}:=\textit{white}$\;
    }
    $l_0.d := \set{\eta(V)\mid \eta\in\textit{Sat}(g_0)}$\; %
    \label{alg:approx-ld-init2}
    $Q:=\varnothing$\;%
    \label{alg:approx-ld-q1}
    $\enqueue(Q,l_0)$\;
    \label{alg:approx-ld-q2}
    \label{alg:approx-ld-mainloop1}
    \While{$Q\neq \varnothing$}{
      $l:=\extractMax(Q)$\;%
      
      \tcp{process incoming edges}  
      \label{alg:approx-ld-procincedges1}
      $d':=l.d$\;
      \ForEach{$l'\in l.p$}{ %
          \ForEach{$l'\xhookrightarrow{g:\alpha} l$}{
              $l.d:=l.d\cup d'\auxop\procEdgeLabels(l',g,\alpha,l,V)$ \; 
          }
      }
      $l.p=\varnothing$\;
      \label{alg:approx-ld-pireset}
      \uIf{$d' \neq l.d$}{
          $l.\mathit{color}:=\textit{grey}$\;
          \label{alg:approx-ld-procincedges2}
      }
        \label{alg:approx-ld-procloops1}
        \tcp{process self-loops}  
        $d'':=l.d$\;\label{proc-self-loops}
        \ForEach{$l\xhookrightarrow{g:\alpha} l$}{
            $l.d:=l.d\cup \procEdgeLabels(l,g,\alpha,l,V)$ \;
        }
        
        \uIf{$d''\neq l.d$}{
          $l.\mathit{color}:=\textit{grey}$\;
          \KwGoTo \ref{proc-self-loops}\;
          \label{alg:approx-ld-procloops2}
        }
      \label{alg:approx-ld-enqueue1}
      \tcp{enqueue immeadiate-neighbours}
      \uIf{$l.\mathit{color}\neq\textit{black}$}{
        \ForEach{$l'\in\dSucc(l)$}{
          $Q:=\enqueue(Q,l')$\;
          $l'.p:=l'.p\cup\set{l}$\;
        }
        $l.\mathit{color}=\textit{black}$\;
        \label{alg:approx-ld-enqueue2}
      }
      \label{alg:approx-ld-mainloop2}
    }
    \KwRet{$\set{ \abracket{(V,l),l.d} \mid l\in\Loc }$}\;
    \label{alg:approx-ld-return}
  }

  \bigskip
  \setcounter{AlgoLine}{0}
  \nonl\myproc{\procEdgeLabels{$l$, $g$, $\alpha$, $l'$, $V$}}{
        $\delta_0 := \set{\eta\in\textit{Sat}(g) \mid \eta(V)\in l.d}$\;
        \text{let }$\alpha:=\alpha^{(1)}\ldots\alpha^{(m)}$\;
        \For{$i=1$ \KwTo $m$}{
          $\delta_i := \set{\eta' = \Effect(\alpha^{(i)},\eta) \mid \eta\in\delta_{i-1}}$
        }
        \KwRet{$\set{\eta(V) \mid \eta\in\delta_m}$}
  }

  \caption{\mbox{Approximation of local domain for $x\in\Var$}}
  \label{alg:variable-approx}
\end{algorithm}

 \begin{algorithm}[!htp]
  \SetAlgoLined\DontPrintSemicolon
  \caption{General abstraction}
  \label{alg:general-abstraction}
  \myproc{\generalVarAbstraction{$\agsym$, $d$, $\mapf$}}{
    $X:=\Argsd(\mapf)$\;
    $Z:=\Argsr(\mapf)$\;
    $\Scope:=\bigcup_{(f_i,\Scope_i)\in\mapf}\Scope_i$\;
    $g_0:=g_0\wedge(Z=f(\eta_0(X)))$\;
    $\agedges_a:=\varnothing$\;
    \ForEach{$l\xhookrightarrow{g:ch\,\alpha}l'$}{
      $F_1:=\{f_i\mid (f_i,\Scope_i)\in \mapf\wedge  l'\in \Scope_i\}$\; %
      $F_2:=\{f_i\mid (f_i,\Scope_i)\in \mapf\wedge  l'\in \Scope_i\}$\; %
      \uIf{$\set{l,l'}\cap\Scope=\varnothing$}{
        $\agedges_a := \agedges_a \cup \{ l\xhookrightarrow{g:ch\,\alpha}l' \}$\;
      }\uElse{
        \ForEach{$\vec{c}\in d(V,l)$}{
          $W_1:=\bigcup_{f\in F_1} \Argsd(f)$\;
          $W_2:=\bigcup_{f\in F_1} \Argsr(f)$\;
          
          $Y_1:=\bigcup_{f\in F_2} \Argsd(f)$\;
          $Y_2:=\bigcup_{f\in F_2} \Argsr(f)$\;
        
          $g':=g[W_1=\restr{\vec{c}}{W_1}]$\;

          $\alpha' := (W_2:= \eta_0(W_2).\alpha$\;
          $\alpha' := (W_1:=\restr{\vec{c}}{W_1}).\alpha'$\;

          $\alpha' := \alpha'.(Y_2:=F_2(\restr{\vec{c}}{Y_1}))$\;
          $\alpha' := \alpha'.(Y_1:= \eta_0(Y_1))$\;

          $\agedges_a := \agedges_a \cup \{ l\xhookrightarrow{g':ch\,\alpha'}l' \}$
        }
      }
    }
    $\agsym.\agedges:=\agedges_a$\;
    \KwRet{$\agsym$}
  }
\end{algorithm}%

\section{Approximating the Domains of Variables}\label{sec:approxDomain}

Given \magname\ \magsym\ over variables $\Var$, the approximation of reachable values for a set of variables $V\subseteq\Var$ takes the \extmagname $\agsym_{\magsym}$, and traverses it using a modified version of a priority-BFS algorithm~\cite{cormen2009introduction}.
It begins with the complement of the coarsest approximation of a local domain $d_0$, and starting from $l_0$ systematically explores locations of graph, and iteratively refines $d(V,l)$ with each visit of $l$ (until the stable approximation is obtained).
Each location $l$ must be visited at least once, and whenever some of its predecessors $l'$ gets approximation $d(V,l')$ refined, the location $l$ gets to be re-visited.

The max-priority queue $Q$ stores the locations that must be visited (possibly anew).
Within the queue, the higher traversal-priority is given for locations with greater reachability index $r(l)$, that is defined as the number of locations $l'\neq l$ reachable from $l$.\footnote{%
The values of $r(l)$ can be derived from reachability matrix produced by Warshall's algorithm, taking a row-wise sum of non-zero
elements apart from those in the main diagonal.} 
This will reduce the number of potential re-visits in comparison with generic FIFO variant.

In the algorithm we associate each location $l$ with attributes $\textit{color}\in\set{\textit{white}, \textit{grey}, \textit{black}}$, subset of relevant predecessors $p\subseteq \Loc\setminus\set{l}$ and its current local domain approximation $d$. 

In lines \ref{alg:approx-ld-init1}--\ref{alg:approx-ld-init2}, locations are initialized with \textit{white} color, an empty set of predecessors and initial approximation $d_0$.
The color indicates if the location is not visited yet (white), being visited and a finer approximation of its local domain was found (grey), or if it was visited (black). 
Lines \ref{alg:approx-ld-q1}--\ref{alg:approx-ld-q2} initialize the queue with location $l_0$. 
The while-loop of lines \ref{alg:approx-ld-mainloop1}--\ref{alg:approx-ld-mainloop2} describes the visit of a location $l$. 
The set of relevant predecessors indicates which had its approximation updated, that, for its part, may result in a `finer' $l.d$ (lines \ref{alg:approx-ld-procincedges1}--\ref{alg:approx-ld-procincedges2}).
After those were taken into account for $l.d$, the $l.p$ is reset (line \ref{alg:approx-ld-pireset}). 
Self-loops are processed separately with possible repetitions until stabilization (lines \ref{alg:approx-ld-procloops1}--\ref{alg:approx-ld-procloops2}).
The auxiliary procedure \procEdgeLabels simulates potential transitions by computing the image (restricted by $V$) of update $\alpha$ over evaluations satisfying guard $g$ and having their $V$ counterpart in source's $l.d$.
If during its visit, a location $l$ was colored grey (from either black or white), then all of its immediate neighbours $l'$ must be inspected independently of their color, adding $l$ as the relevant predecessor of each and lastly painting it black (lines \ref{alg:approx-ld-enqueue1}--\ref{alg:approx-ld-enqueue2}). 

The algorithm halts and returns the now stable approximation $d$ (line \ref{alg:approx-ld-return}) when the queue is empty and all locations are black.
Clearly, the termination is guaranteed by the finiteness of the variable domains.

\section{General variant of abstraction}\label{sec:general-abstraction}

For each pair $(f_i,\Scope_i)$ of mapping and its scope, it transforms the edges in $\agsym=\agsym_j$ in the following way:
\begin{itemize}
\item edges entering or inner the $\Scope_i$ have their actions appended with (1) update of the target variable $z_i$ and (2) update which sets the values of the source variables $X_i$ to their defaults (resetting those),
\item edges leaving or within the $\Scope_i$ an  have actions prepended with (1) update of source variables $X_i$ (a temporarily one to be assumed for original action) and (2) update which resets the values of the target variable $z_i$. 
\end{itemize}

\section{Proof for must-abstraction}\label{sec:must-abstr-proof}

Let $M_1=\mathcal{M}(\magsym)$ and $M_2=\mathcal{M}(\abstr_{\mapf}^{\must}(\magsym))$, $W=\Argsd(\mapf)$, $V=\Var_1\setminus W$, $\Var_2 = V\cup \Argsr(\mapf)$.

Now, we show that $M_1$ simulates $M_2$ over $AP(V)$ using the \autoref{def:simulation-relation}.

The requirement (i) is satisfied as the set of locations $\Loc$ remains unchanged and for any $\eta_1\in \textit{Sat}(g_0)$ and $\eta_2\in \textit{Sat}(g_0\wedge g_R)$, where $g_R\in \textit{Cond}(\Argsr)$, it holds that $\eta_1(V)=\eta_2(V)$.

By the construction of the $\abstr^{\must}$ each state $\abracket{l,\eta}\in \textit{St}_2$ is either initial $\abracket{l,\eta} \in I_2$ or $\exists \abracket{l',\eta'}\in \textit{St}_2$ such that $\eta = \Effect(\alpha, \eta'[W=c])$ for some $c\in d(W,l')$, $l'\xhookrightarrow{g:\alpha}l$, and $\eta'[W=c]\in \textit{Sat}(g)$.
Now we can utilize \autoref{lemma:under-approx-domain} and construct a sequence of corresponding states in $M_1$ up intill $\abracket{l,\eta[W=c]}$, which concludes with (ii) being satisfied.

\end{document}